\theoremstyle{plain}
\newtheorem{theorem}{Theorem}
\newtheorem{lemma}[theorem]{Lemma}
\theoremstyle{definition}
\newtheorem{definition}[theorem]{Definition}
\newtheorem{example}[theorem]{Example}
\newtheorem{remark}[theorem]{Remark}
\newcommand{\D}{{\mathcal D}}
\newcommand{\bM}{{\boldsymbol{M}}} 
\newcommand{\bI}{{\boldsymbol I}} 
\newcommand{\bP}{{\boldsymbol P}} 
\newcommand{\bQ}{{\boldsymbol Q}} 
\newcommand{\bF}{{\boldsymbol F}} 
\newcommand{\bH}{{\boldsymbol H}} 
\newcommand{\bK}{{\boldsymbol K}}
\newcommand{\bA}{{\boldsymbol A}}
\newcommand{\bB}{{\boldsymbol B}}
\newcommand{\bC}{{\boldsymbol C}}
\newcommand{\bD}{{\boldsymbol D}}
\newcommand{\bE}{{\boldsymbol E}}
\newcommand{\bS}{\boldsymbol{S}}
\newcommand{\bT}{\boldsymbol{T}}
\newcommand{\ba}{{\boldsymbol a}}
\newcommand{\bu}{{\boldsymbol u}}
\newcommand{\bv}{{\boldsymbol v}}
\newcommand{\bs}{{\boldsymbol s}}
\newcommand{\bc}{{\boldsymbol c}}
\newcommand{\bO}{{\boldsymbol O}}
\newcommand{\bzero}{{\boldsymbol 0}}
\newcommand{\br}{{\boldsymbol{r}}}
\newcommand{\bx}{{\boldsymbol{x}}}
\newcommand{\bG}{{\boldsymbol{G}}}
\newcommand{\be}{{\boldsymbol e}}
\newcommand{\XiG}{\boldsymbol{\Xi_{\boldsymbol{G}''}}}
\newcommand{\XiA}{\boldsymbol{\Xi_{\boldsymbol{A}''}}}
\newcommand{\XiB}{\boldsymbol{\Xi_{\boldsymbol{B}''}}}
\newcommand{\XiC}{\boldsymbol{\Xi_{\boldsymbol{C}''}}}
\newcommand{\XiD}{\boldsymbol{\Xi_{\boldsymbol{D}}}}
\newcommand{\XiE}{\boldsymbol{\Xi_{\boldsymbol{E}}}}
\newcommand{\XiF}{\boldsymbol{\Xi_{\boldsymbol{F}}}}
\newcommand{\XiHi}{\boldsymbol{\Xi_{\boldsymbol{H}_i}}}
\newcommand{\XiBo}{\boldsymbol{\Xi_{\boldsymbol{B}''_1}}}
\newcommand{\XiBt}{\boldsymbol{\Xi_{\boldsymbol{B}''_2}}}
\newcommand{\XiBi}{\boldsymbol{\Xi_{\boldsymbol{B}''_i}}}
\newcommand{\XiBr}{\boldsymbol{\Xi_{\boldsymbol{B}''_{(k-\mu)/p}}}}
\newcommand{\XiEo}{\boldsymbol{\Xi_{\boldsymbol{E}_1}}}
\newcommand{\XiEt}{\boldsymbol{\Xi_{\boldsymbol{E}_2}}}
\newcommand{\XiEi}{\boldsymbol{\Xi_{\boldsymbol{E}_i}}}
\newcommand{\XiEr}{\boldsymbol{\Xi_{\boldsymbol{E}_{(k-\mu)/p}}}}
\newcommand{\ff}{\mathbb{F}}
\newcommand{\fq}{\mathbb{F}_q}
\newcommand{\fs}{f_\text{super}}
\newcommand{\fmp}{f_{(\mu,p)}}
\newcommand{\bst}{\tilde{\boldsymbol{s}}}
\newcommand{\vt}{\tilde{v}}
\newcommand{\brt}{\tilde{\boldsymbol{r}}}
\newcommand{\al}{\alpha}
\newcommand{\bt}{\beta}
\newcommand{\entropy}{{\sf H}}
\newcommand{\sgn}{{\sf sgn}}
\newcommand{\dist}{{\mathsf{d}}}
\newcommand{\weight}{{\mathsf{wt}}}
\newcommand{\cC}{{\mathscr{C}}}
\newcommand{\nin}{\noindent}
\newcommand{\et}{{\emph{et al.}}}
\begin{document}
\pagestyle{empty}

\title{Secure Erasure Codes With Partial Decodability}
 \author{
   \IEEEauthorblockN{
     Son Hoang Dau\IEEEauthorrefmark{1},
     Wentu Song\IEEEauthorrefmark{2}, 
     Chau Yuen\IEEEauthorrefmark{3}
		} 
   \IEEEauthorblockA{
   Singapore University of Technology and Design, Singapore\\     
		Emails: $\{${\it\IEEEauthorrefmark{1}sonhoang\_dau, 
		\IEEEauthorrefmark{2}wentu\_song,
		\IEEEauthorrefmark{3}yuenchau}$\}$@sutd.edu.sg		
		}
 }
\maketitle

\begin{abstract}
The MDS property (aka the $k$-out-of-$n$ property) 
requires that if a file is split into several symbols and subsequently encoded into 
$n$ coded symbols, each being stored in one storage node of a distributed storage system (DSS), 
then an user can recover the file by accessing any $k$ nodes.  
We study the so-called \emph{$p$-decodable $\mu$-secure} erasure coding scheme $(1 \leq p \leq k - \mu, 
0 \leq \mu < k, p | (k-\mu))$, which satisfies the MDS property and the following additional properties: 
\begin{itemize}
	\item[(P1)] strongly secure up to a threshold: an adversary which eavesdrops at most $\mu$ storage nodes
	gains no information (in Shannon's sense) about the stored file,
	\item[(P2)] partially decodable: a legitimate user can recover a subset of $p$ file symbols by accessing some 
	$\mu + p$ storage nodes. 
\end{itemize}
The scheme is \emph{perfectly} $p$-decodable $\mu$-secure if it satisfies 
the following additional property:
\begin{itemize}
	\item[(P3)] weakly secure up to a threshold: an adversary which eavesdrops more than $\mu$ but 
	less than $\mu+p$ storage nodes cannot reconstruct any part of the file.
\end{itemize}
Most of the related work in the literature only focused on the case $p = k - \mu$. 
In other words, no partial decodability is provided: 
an user cannot retrieve any part of the file by accessing less than $k$ nodes. 
For our more general code, Property (P2) guarantees partial decodability: once the user accesses 
$p$ more nodes than the strong security threshold $\mu$, 
it can start to decode some subset of $p$ file symbols. 

We provide an explicit construction of $p$-decodable $\mu$-secure coding schemes over small fields 
for all $\mu$ and $p$. 
That construction also produces \emph{perfectly} $p$-decodable $\mu$-secure schemes over small fields
when $p = 1$ (for every $\mu$), and when $\mu = 0, 1$ (for every $p$).
We establish that perfect schemes exist over \emph{sufficiently large} fields for almost all $\mu$ and $p$.   
\end{abstract}

\section{Introduction}
\label{sec:intro}

Data replication is the most common way for distributed storage systems (DSS) to guarantee
high data availability and node failure tolerance. Most of the current distributed storage systems
are using $3$-way replication where each piece of data is replicated three times and each of its copy
is stored at a different storage node in the system. If at most two storage nodes are down, the data
is still available at at least one node.
However, $3$-way replication is highly inefficient in storage overhead, as only a very modest portion $33\%$ 
of the whole storage capacity can be used.
As the demand for data storage scales up quickly, replication based storage
systems incur significantly high cost in terms of storage footprint and power usage for cooling systems.
It is well known that erasure codes~\cite{MW_S} possess lots of advantages over replication~\cite{WeatherspoonKubiatowicz2002}. 
Giants such as Microsoft, Facebook, and Google have, therefore, included erasure codes, alongside
replication, in their 
distributed storage systems~\cite{Huang2012, Thusoo2010, 
Ford2010}.    

%%%%%%%%%%%%%%%%%%%%%%%%%%%%%%
\begin{figure}[t]
\centering
\includegraphics[scale=0.9]{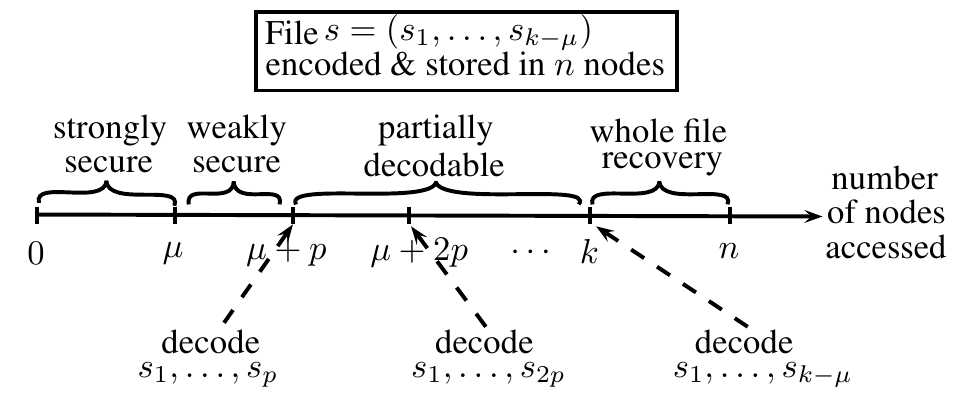}
\caption{Illustration for perfectly $p$-decodable $\mu$-secure erasure coding scheme.
Partial decoding starts when the first $\mu+p$ nodes are accessed. At this point, the first
subset of $p$ file symbols $(s_1, \ldots, s_p)$ is reconstructed. When the first $\mu + 2p$
nodes are accessed, the next subset of $p$ file symbols $(s_{p+1}, \ldots, s_{2p})$ can also be
decoded. In fact, if the user requests only $(s_{p+1}, \ldots, s_{2p})$, it is sufficient to
access the first $\mu$ nodes and the nodes numbered from $\mu+p+1$ to $\mu + 2p$.
The general requirement for partial decoding is: $(s_{rp+1},\ldots, s_{(r+1)p})$ can be
reconstructed by accessing the first $\mu$ nodes and $p$ additional nodes numbered from $\mu+rp+1$
to $\mu + (r+1)p$. We always assume that $p|(k-\mu)$.}
\label{fig:illustration}
\vspace{-15pt}
\end{figure}
%%%%%%%%%%%%%%%%%%%%%%%%%%%%%%%%%%%
In this work we investigate a construction of erasure coding schemes for DSS that are secure
in terms of data confidentiality: even when some of its storage nodes are compromised or 
eavesdropped by some unwanted party, the stored file is still kept confidential.     
Moreover, such secure codes must provide easy partial decoding for a legitimate
user, which can often access more nodes than an illegal adversary.  
We proposed the so-called \emph{$p$-decodable $\mu$-secure} erasure coding scheme $(1 \leq p \leq k - \mu, 
0 \leq \mu < k)$, which satisfies the MDS property (the file can be reconstructed by accessing any $k$ out of $n$ 
storage nodes) and the following additional properties: 
\begin{itemize}
	\item[(P1)] \emph{strongly secure} up to a threshold: an adversary which eavesdrops at most $\mu$ storage nodes
	gain no information (in Shannon's sense) about the stored file,
	\item[(P2)] \emph{partially decodable}: a legitimate user can recover a subset of $p$ file symbols by accessing some 
	$\mu + p$ storage nodes. 
\end{itemize}
Regarding (P2), throughout this paper we always assume that $p | (k - \mu)$. 
In other words, we can always partition the set of $k-\mu$ file symbols into 
subsets of size $p$ each. 
Apart from (P1)-(P2), if the following additional property is also satisfied, the scheme is referred to as \emph{perfectly} $p$-decodable $\mu$-secure:
\begin{itemize}
	\item[(P3)] \emph{weakly secure} up to a threshold: an adversary which eavesdrops more than $\mu$ but 
	less than $\mu+p$ storage nodes cannot reconstruct any part of the file,
\end{itemize}
We illustrate the properties of a perfect coding scheme in Fig~\ref{fig:illustration}. 
%%%%%%%%%%%%%%%%%%%%%%%%%%%%%%%%
\begin{table*} 
\begin{tabular}{|c|c|c|c|c|}
\hline
& Code rate & Strong security threshold & Weak security threshold &  Partial decodability threshold\\
\hline
Systematic erasure code & $k/n$ & $0$ & 0 & $1$\\
\hline
Codes for (erasure) wiretap channel II~\cite{OzarowWyner1984,SubramanianMcLaughlin2009}
& $(k-\mu)/n$ & $\mu$ & N.A. & N.A.\\
\hline
Ramp secret sharing scheme~\cite{Yamamoto1986, Yamamoto2006, 
OliveiraLimaVinhozaBarrosMedard2012}
& $(k-\mu)/n$ & $\mu$ & $k-1$ & $k$\\
\hline
\textbf{$p$-decodable $\mu$-secure
code} & $(k-\mu)/n$ & $\mu$ & N.A.
& $\mu + p$ \\
\hline
\textbf{Perfectly $p$-decodable $\mu$-secure
code} & $(k-\mu)/n$ & $\mu$ & $\mu + p - 1$
& $\mu + p$ \\
\hline
\end{tabular}
\caption{Comparison among five erasure coding schemes. Strong (weak) security
threshold refers to the maximum number of storage nodes that the adversary is allowed to
access without jeopardizing the strong (weak) security of the scheme.
Partial decodability threshold refers to the number of nodes an user
has to access to start decoding the file partially. 
Here $0 \leq \mu \leq k-1$, $1 \leq p \leq k - \mu$, and $p | (k - \mu)$.
When $p = 1$ and $\mu=0$, the (perfectly) $p$-decodable $\mu$-secure scheme is actually systematic. An `N.A.' entry means that the corresponding
threshold can take any value and that threshold is not even considered
in the design of the coding scheme.}
\vspace{-10pt}
\end{table*}

Note that when $p = k - \mu$, only (strong and weak) security is guaranteed and 
there is no partial decodability: 
an user cannot retrieve any part of the file by accessing less than $k$ nodes. 
Such a secure coding scheme was first studied in the work of 
Yamamoto~\cite{Yamamoto1986} in the context of ramp secret sharing scheme. 
Recently, superregular matrices (all square submatrices are invertible) such as Cauchy matrices 
have also been employed to construct such codes~\cite{Yamamoto2006, OliveiraLimaVinhozaBarrosMedard2012}.  
Similar work in secure regenerating codes, which can be regarded as vector erasure codes with optimal 
node repair, can be found in~\cite{DauSongYuen2014, KadheSprintson2014}.
%The concept of weak security, in constrast to strong security, was also discovered by 
%Bhattad and Narayanan~\cite{BhattadNarayanan2005} in a more general context of network coding. 

In this work we address the gap in the literature regarding partial decodability of 
erasure coding schemes.  
In our proposed coding scheme, while Property (P1) and Property (P3) provide (strong and weak)
security, Property (P2) guarantees partial decodability: once the user accesses $p$ more nodes 
than the strong security threshold $\mu$, it can start to decode some subset of $p$ file symbols. 
A secure erasure code that supports partial decoding is particularly useful in applications
involving retrieval of large files. A typical example is in video streaming services, where
a large-size video is often split into chunks and these chunks are then streamed one by one
to the user. Our proposed coding scheme, if employed in such services, would provide
not only confidentiality but also ease of partial retrieving of the video to any desired level.  
We stress that extending the existing secure coding schemes 
by taking into account partial decodability does not result in any overhead.  
%%%%%%%%%%%%%%%%%%%%%%%%%%%%%%%%%
\begin{table}[H]
\centering
\begin{tabular}{|l|l|l|}
\hline
& Small field & Large field\\
\hline
\multirow{1}{*}{$p$-decodable $\mu$-secure} & $^*\forall \mu, \forall p$ & N.A. \\
\hline
\multirow{3}{*}{perfectly $p$-decodable $\mu$-secure} & $p = k - \mu$ \cite{Yamamoto1986, Yamamoto2006, OliveiraLimaVinhozaBarrosMedard2012} & 
\multirow{3}{*}{$^*\forall \mu, \forall p$}\\
 & $^*p = 1$, $\forall \mu$  & \\
& $^*\mu = 0, 1$, $\forall p$ & \\
\hline
\end{tabular}
\caption{A summary of existence of (perfectly) $p$-decodable $\mu$-secure
coding schemes. The entries with an asterisk `$^*$' are the new results established in
this work. A field is ``small" if its size is a polynomial in $n$ and $k$, or ``large'' if otherwise.}
\vspace{-10pt}
\end{table}
%%%%%%%%%%%%%%%%%%%%%%%%%%%%%%%%%%%%%%%%%%%%%
Our main contribution is summarized below. 
\begin{itemize}
	\item We provide an explicit construction over small fields for $p$-decodable 
	$\mu$-secure coding schemes for any $p$ and $\mu$. 
	\item We provide an explicit construction over small fields for \emph{perfectly} $p$-decodable 
	$\mu$-secure coding schemes when $p = 1$ (for every $\mu$), and when $\mu = 0,1$ (for every $p$, 
	$p | k - \mu$).
	\item We prove that perfectly $p$-decodable $\mu$-secure schemes exist over sufficiently
	large fields for almost all $p$ and $\mu$. 
\end{itemize}
The paper is organized as follows. Section~\ref{sec:pre} provides necessary notation and definitions. 
The construction of $p$-decodable $\mu$-secure schemes is presented in Section~\ref{sec:imperfect}. 
We discuss the existence of perfect coding schemes in Section~\ref{sec:perfect}. 
The paper is concluded in Section~\ref{sec:conclusion}. 

\section{Preliminaries}
\label{sec:pre}

\subsection{Notation}
\label{subsec:notation}

Let $\fq$ denote the finite field with $q$ elements. 
Let $[n]$ denote the set $\{1,2,\ldots,n\}$. 
%The \emph{support} of a vector $\bu = (u_1, \ldots, u_n) \in \fq^n$ is defined by
%$\supp(\bu) = \{i \in [n]:\ u_i \neq 0\}$.
For a vector $\bu = (u_1,\ldots,u_n) \in \fq^n$, let $\bu^{\text{T}}$ denote the transpose of $\bu$. 
For any $n \geq 1$ let $\bI_n$ denote the identity 
matrix of order $n$.
We also use $\bO$ to denote the all-zero matrix
of certain size. 

Below we define standard notation from coding theory (see \cite{MW_S}).
The (Hamming) \emph{weight} of $\bu$ is $\weight(\bu) = |\{i: u_i \neq 0\}|$. 
The (Hamming) \emph{distance} between two vectors $\bu$ and $\bv$ is $\dist(\bu,\bv) = \weight(\bu-\bv)$. 
An $[n,k,d]_q$ (error-correcting) code $\cC$ (sometimes $d$ and $q$ are dropped) is a subspace of the
vector space $\fq^n$ of dimension $k$ such that the minimum distance between any two distinct vectors
in that subspace is at least $d$. We called $d =\dist(\cC)$ the \emph{minimum distance} of the code. 
The well-known Singleton bound states that for any $[n, k, d]_q$ code, $d \leq n - k + 1$. A
code attaining the Singleton bound is called maximum distance separable (MDS).  
Any vector in a code is referred to as a codeword. 
A generator matrix of an $[n,k]_q$ code
is a $k \times n$ matrix whose rows are linearly independent codewords of the code. 

\subsection{Coset Coding Technique}
\label{subsec:coset_code}

%Let $\bR = (R_1, R_2, \ldots, R_\mu)$, where $R_i$'s $(i \in [\mu])$ are 
%independent and identically uniformly distributed random variables over $\fq$.  
%Let $\br = (r_1, r_2, \ldots, r_\mu)$ be a realization of $\bR$. 
Let $\br = (r_1, \ldots, r_\mu)$ be a vector of independent and 
identically uniformly distributed random variables over $\fq$.  
Let $\bS = (S_1, S_2, \ldots, S_{k-\mu})$, where $S_i$'s $(i \in [k-\mu])$ are 
independent and identically uniformly distributed random variables over some alphabet $\fq$.  
We assume that the file to be stored in the system is $\bs = (s_1, s_2, \ldots, s_{k-\mu}) \in \fq^{k-\mu}$, 
a realization of $\bS$.  
We call $k-\mu$ the file size and each $s_i$ a file symbol.
%In this work we always set $t = k - \mu$. 
  
We denote by $\D(n,k)$ a DSS with $n$ storage nodes where the file
can be recovered from the contents of any $k$ out of $n$ nodes.
Node $i$ $(i \in [n])$ stores a coded symbol $c_i$, which is a function of the file symbols $s_i$'s
and the random symbols $r_j$'s. 
Let $\bc = (c_1, c_2, \ldots, c_n)$. 
Let $\bC = (C_1, C_2, \ldots, C_n)$ be $\bc$'s corresponding vector of random variables over $\fq$.   
We only consider here scalar linear erasure coding schemes, based on $[n, k]_q$ MDS codes,
described as follows. 
The file $\bs \in \fq^{k-\mu}$ is encoded to $\bc = \bx \bG \in \fq^n$, where
\begin{itemize}
	\item $\bx = (\bs \mid \br)$ is obtained by concatenating $\bs$ and $\br$,
	\item $\bG$ is a generator matrix of an $[n,k, n-k+1]_q$ MDS code. We
	often refer to $\bG$ as the \emph{coding matrix}. 
\end{itemize}
It is well known in coding theory~\cite{MW_S} that if Node $i$ stores $c_i$ produced as above then the file 
can be recovered by accessing any $k$ nodes. 
For the coding scheme to be strongly secure against an adversary that can
access $\mu$ nodes (see Definition~\ref{def:strong_security}), the last $\mu$ rows of $\bG$ must generate an $[n, \mu]$
MDS code.  
In fact, this is an equivalent way to describe the coset coding technique invented by 
Ozarow and Wyner~\cite{OzarowWyner1984}. This technique has been widely adopted
in the network coding literature to secure a network code against a wiretapper (see, 
for instance~\cite{CaiChan2011} and references therein). 

%Extension of this work to regenerating codes~\cite{DimakisGodfreyWainwrightRamchandran2007}
%is discussed in Section Conclusion. 

\subsection{Security and Partial Decodability}
\label{subsec:security_decodability}

We assume that an adversary can eavesdrop/access any $m$ storage nodes of its choice
and tries to learn illegally the content of the stored file. 
We refer to $m$ as the adversary's \emph{strength}. 
In the following definitions, recall that $\bS = (S_1, S_2, \ldots, S_{k-\mu})$ represents the file stored in the system.  

\begin{definition}
\label{def:strong_security}
An erasure coding scheme for a DSS $\D(n,k)$ is called \emph{strongly secure}
against an adversary of strength $m$ $(m < k)$ if the entropy
\[
\entropy\big(\bS \mid \{C_i: i \in E\}\big) = \entropy \big( \bS \big),
\] 
for all subsets $E \subseteq [n]$, $|E| \leq m$. We also refer to such a coding scheme as \emph{strongly} $m$-secure. 
\end{definition}

In words, a coding scheme is strongly $\mu$-secure if an adversary 
which can access an arbitrary set of at most $\mu$ storage nodes
cannot obtain any information at all about the stored file. 
It is well-known that as long as the bottom $\mu \times n$ submatrix of
$\bG$ also generates an $[n, \mu]$ MDS code then the coding scheme described in 
Section~\ref{subsec:coset_code} is strongly $\mu$-secure. 
The MDS code generated by such a matrix $\bG$ is often called a \emph{nested} MDS code in the literature. 

\begin{definition}
\label{def:weak_security}
An erasure coding scheme for a DSS $\D(n,k)$ is called \emph{weakly secure}
against an adversary of strength $m$ $(m < k)$ if the entropy
\[
\entropy\big(S_j \mid \{C_i: i \in E \}\big) = \entropy\big(S_j\big),
\] 
for all $j \in [k-\mu]$ and for all subsets $E \subseteq [n]$, $|E| \leq m$.
We also refer to such a coding scheme as \emph{weakly} $m$-secure.
\end{definition}
 
The following lemma specifies a necessary and sufficient condition for the weak security of
the erasure coding scheme described in Section~\ref{subsec:coset_code}.  

\begin{lemma}
\label{lem:basic}
Let $\bs= (s_1,s_2,\ldots,s_{k-\mu}) \in \fq^{k-\mu}$ be the stored file and $\br = (r_1, r_2, \ldots, r_\mu)$
be some random vector over $\fq$. Let $(\bs \mid \br) \bM$, where $\bM$ is a $k \times m$ matrix over $\fq$, 
represent the $m$ coded symbols stored at some $m$ storage nodes that the adversary has access to. 
Then the adversary cannot determine any particular file symbol $s_i$ if and only if the column space of $\bM$
does not contain $\be_i$ for every $i \in [k-\mu]$, where $\be_i$ is the unit vector
with only one nonzero coordinate at the $i$th position. 
\end{lemma}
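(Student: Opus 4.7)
The plan is to prove the biconditional by handling each direction separately and in each case working with a fixed index $i \in [k-\mu]$. Throughout, I treat $\bs$ and $\br$ as realizations of independent uniform random vectors $\bS$ and $\bR$ over $\fq^{k-\mu}$ and $\fq^\mu$ respectively, so that ``the adversary cannot determine $s_i$'' is interpreted in the information-theoretic sense: $S_i$ is independent of the vector of observed coded symbols $\bC_E = (\bS \mid \bR)\bM$. The key object is the linear map $\phi: \fq^k \to \fq^m$ defined by $\phi(\bx) = \bx\bM$, and the central fact I will exploit is the standard duality between the column space of $\bM$ and the left null space, namely that $\be_i \in \text{col}(\bM)$ if and only if every $\bw \in \fq^k$ with $\bw \bM = \bzero$ satisfies $w_i = 0$.

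For the forward (contrapositive) direction, I would assume $\be_i \in \text{col}(\bM)$ and exhibit an explicit decoder. Pick any $\bv \in \fq^m$ with $\bM \bv = \be_i^{\text{T}}$; then
\[
\bC_E \bv \;=\; (\bS \mid \bR)\,\bM\,\bv \;=\; (\bS \mid \bR)\,\be_i^{\text{T}} \;=\; S_i,
\]
so $S_i$ is a deterministic (linear) function of $\bC_E$, which is incompatible with weak security. Hence if the adversary cannot determine $s_i$, then $\be_i \notin \text{col}(\bM)$, and taking this for every $i \in [k-\mu]$ yields one direction of the equivalence.

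For the reverse direction, suppose $\be_i \notin \text{col}(\bM)$ for a given $i$. By the duality noted above, there exists $\bw = (\bw_s \mid \bw_r) \in \fq^k$ with $\bw \bM = \bzero$ and $w_i \neq 0$. This vector generates a one-parameter family of symmetries of $\phi$: for every $\alpha \in \fq$, the translation $(\bs,\br) \mapsto (\bs,\br) + \alpha \bw$ preserves the observation $(\bs \mid \br)\bM$ while shifting the $i$-th coordinate of $\bs$ by $\alpha w_i$. Since $S, R$ are uniform on $\fq^k$ and $\phi$ is linear, the conditional distribution of $(\bS,\bR)$ given $\bC_E = \bc$ is uniform on the fiber $\phi^{-1}(\bc)$; the translation by $\alpha \bw$ is a bijection of this fiber, and as $\alpha$ ranges over $\fq$ the value $\alpha w_i$ ranges over all of $\fq$. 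Hence the conditional distribution of $S_i$ given $\bC_E = \bc$ is uniform on $\fq$ for every attainable $\bc$, so $S_i$ is independent of $\bC_E$, i.e., the adversary cannot determine $s_i$. Applying this for each $i$ gives the other direction.

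I do not expect any serious obstacle: the only subtle point is articulating precisely what ``cannot determine $s_i$'' means (I use the Shannon-entropy notion from Definition~\ref{def:weak_security}) and invoking the uniformity of $\bS$ and $\bR$ so that the translation argument yields a uniform conditional distribution rather than merely a nonempty preimage. Everything else is a direct application of linear algebra over $\fq$ and the duality between column space and left null space.
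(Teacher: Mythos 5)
Your proof is correct and takes essentially the same route as the paper's: the ``only if'' direction via an explicit linear decoder solving $\bM\bv = \be_i^{\text{T}}$, and the ``if'' direction via a left-null vector $\bw$ with $\bw\bM = \bzero$ and $w_i \neq 0$ (obtained from the column-space/left-null-space duality), used to show the conditional distribution of $S_i$ given the observation is uniform. The paper phrases the second step as counting solutions to a linear system parametrized by the guessed value $\tilde{v}$ and showing that count is constant, while you phrase it as translation by $\alpha\bw$ being a bijection of each fiber that sweeps $S_i$ through all of $\fq$; these are the same argument in two different dresses.
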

\begin{proof}
Appendix~\ref{app:1}. 
\end{proof}
\vskip 10pt 

The intuition behind the proof of this lemma is explained below. 
As the adversary obtains $(\bs \mid \br) \bM$, it can linearly transform these coded symbols
by considering the product $(\bs \mid \br) \bM \boldsymbol{\alpha}^{\text{T}}$, 
where $\boldsymbol{\alpha} \in \fq^m$ is some coefficient vector. 
The adversary can determine a file symbol $s_i$ $(i \in [k-\mu])$ if and only if there exists $\boldsymbol{\alpha}$
so that $\bM \boldsymbol{\alpha}^{\text{T}} = \be_i$. 
As $\bM \boldsymbol{\alpha}^{\text{T}}$ is a vector of $\cC^{\bM}$, we derive the conclusion
in Lemma~\ref{lem:basic}. 

After Yamamoto~\cite{Yamamoto1986}, the concept of weak security was also discovered by Bhattad and Narayanan~\cite{BhattadNarayanan2005} in a more general context of network coding. 
Weak security is important in practice since it guarantees that no meaningful 
information is leaked to the adversary, and often requires no additional overhead. 
For instance, suppose that the file $\bs = (s_1, s_2, s_3)$ is to be stored in a DSS
with four storage nodes, which tolerates one node failure. 
Using an usual systematic erasure code, the four nodes store $\bc = (s_1, s_2, s_3, s_1 + s_2 + s_3)$. 
However, if an adversary can access any node among the first three, then 
it can retrieve some $s_i$, which is part of the file. 
On the other hand, if the file is encoded into $\bc = (s_1+s_2, 
s_1 + s_3, s_2 + s_3, s_1 + s_2 + s_3)$, then an adversary who accesses one storage node would not
be able to determine any $s_i$. Indeed, for instance if it observes $s_1+s_2$, then 
it cannot determine the exact value of either $s_1$ or $s_2$, as for the adversary, both 
$s_1$ and $s_2$ are completely random variables.  
If $\bs$ is a video and $s_i$'s are movie chunks, then by using the latter coding scheme, 
an adversary who observes one storage node cannot determine each chunk, and hence, 
cannot play any part of the movie. Such coding scheme is said to be weakly secure against
an adversary of strength one, or weakly $1$-secure. 
Moreover, that coding scheme consists of the same number of storage nodes and can also
tolerate one node failure, hence introduces no storage overhead compared to a normal systematic
code. In fact, while strong security always comes with a cost in storage capacity, 
weak security is often given for free.  

\begin{definition} 
\label{def:security_decodability}
We consider the coding scheme described in Section~\ref{subsec:coset_code}, where
$\bG$ is a generator matrix of an $[n,k, n-k+1]_q$ MDS code and the file $\bs \in \fq^{k-\mu}$ is encoded into 
the coded vector $\bc = (\bs \mid \br) \bG \in \fq^n$. 
Suppose that $0 \leq \mu < k$, $1 \leq p \leq k - \mu$, and moreover $p | (k-\mu)$.  
The coding scheme based on $\bG$ is \emph{$p$-decodable $\mu$-secure} 
if it satisfies the following properties simultaneously. 
\begin{itemize}
	\item[(P1)] It is strongly $\mu$-secure as defined in Definition~\ref{def:strong_security}. 
	\item[(P2)] It is $p$-decodable: each subset of $p$ file symbols $\{s_{rp+1}, s_{rp+2}, \ldots, s_{(r+1)p}\}$ $(0 \leq r \leq (k-\mu)/p - 1)$
	can be reconstructed from the content of some $\mu + p$ storage nodes.  
\end{itemize}
The coding scheme is \emph{perfectly $p$-decodable $\mu$-secure} if it 
satisfies the following additional property:
\begin{itemize}
  \item[(P3)] It is weakly $(\mu+p-1)$-secure as defined in Definition~\ref{def:weak_security}.
\end{itemize}
We also say that the corresponding coding matrix $\bG$ is (perfectly) $p$-decodable $\mu$-secure. 
\end{definition} 
\vskip 10pt 

\begin{remark}
\mbox{}
\begin{itemize}
	\item A (perfectly) $1$-decodable $0$-secure code is simply a systematic code in the classical sense, where 
	each file symbol is stored as is (in its clear form) at some node.
	\item A perfectly $(k-\mu)$-decodable $\mu$-secure code is the type of secure codes studied in the work of 
	Yamamoto~\cite{Yamamoto2006} and Olivera {\et}~\cite{OliveiraLimaVinhozaBarrosMedard2012}.
	\item We can replace the weak security in (P3) by the \emph{block security} (equivalently, security against guessing)~\cite{BhattadNarayanan2005, DauSongYuen2014} which requires a stronger
	condition that
no information about any \emph{subset} of file symbols up to a certain 
size is known to the adversary. In fact, all of our results
in this work can be extended to block security. However, to simplify
the presentation, we restrict ourselves to only weak security. 
\end{itemize}
\end{remark}

We illustrate the concept of perfectly $p$-decodable $\mu$-secure coding scheme in the following example.

\begin{example}
\label{ex:illustration}
Let $k = 5$, $n = 6$, $\mu = 1$, $p = 2$, and $q = 11$. Let $\bs = (s_1, s_2, s_3, s_4) \in \ff^4_{11}$ be the stored file and $r$ a random symbol over $\ff_{11}$.
We use the following coding matrix
\[
\bG = 
\begin{pmatrix}
0 & 1 & 6 & 0 & 0 & 7\\
0 & 6 & 4 & 0 & 0 & 0\\
0 & 0 & 0 & 2 & 10 & 7\\
0 & 0 & 0 & 7 & 1 & 1\\
4 & 5 & 10 & 1 & 9 & 6.
\end{pmatrix}
\]   
The coding scheme is 
\[
\bc = (\bs \mid r) \bG. 
\]
We show below that this coding scheme is perfectly $2$-decodable $1$-secure. 
Firstly, it is easy to verify that $\bG$ generates a $[6,5]_{11}$ MDS code. Therefore, 
the file can be reconstructed by accessing any five storage nodes. 
The bottom $1 \times 6$ matrix also obviously generates a $[6,1]$ MDS code,
hence guarantees that the scheme is strongly $1$-secure. 
Hence (P1) is satisfied. 
To verify (P3), note that $\mu + p - 1 = 1 + 2 -1 = 2$. 
We can easily verify that any two columns of $\bG$ do not generate an 
unit vector $\be_i$ for every $i \in [4]$. Hence, according to 
Lemma~\ref{lem:basic}, the coding scheme is weakly $2$-secure.  
Lastly, for (P2), we prove that each of the 2-subsets $\{s_1,s_2\}$ and
$\{s_3,s_4\}$ can be reconstructed by accessing some three nodes $(\mu + p = 3)$.
Indeed, by accessing the first three nodes, an user obtains the product
\[
(s_1, s_2, s_3, s_4, r) 
\begin{pmatrix}
0 & 1 & 6\\
0 & 6 & 4\\
0 & 0 & 0\\
0 & 0 & 0\\
4 & 5 & 10. 
\end{pmatrix}
\]  
If we post-multiply the above product with the vector $\al = (10, 4, 5)^{\text{T}}$
then we obtain $s_1$. If we post-multiply the above product with the vector
$\beta = (5, 5, 1)^{\text{T}}$ then we obtain $s_2$. Hence, $\{s_1,s_2\}$
can be reconstructed by accessing the first three nodes. Similarly, we can verify
that $\{s_3, s_4\}$ can be reconstructed by accessing Node 1, Node 4, and Node 5. 
\end{example}  

\section{A Construction of $p$-Decodable $\mu$-Secure Coding Schemes}
\label{sec:imperfect}

We establish in this section a general construction 
for $p$-decodable $\mu$-secure erasure coding schemes, which satisfy
(P1) and (P2): an adversary which can access the content of $\mu$ storage nodes 
gains no information about stored file, and a legitimate user can retrieve each subset
of $p$ file symbols by accessing some $\mu + p$ storage nodes.  

\subsection{A General Construction}
\label{subsec:general}

We start with the definition of superregular matrices, which is critical in our construction. 

\begin{definition}[\cite{RothLempel1989}] 
A \emph{superregular} matrix is a matrix where every square submatrix is invertible.  
\end{definition} 

Two well-known constructions of superregular matrices are via Cauchy matrices and 
Vandermonde matrices~\cite{LacanFimes2004}. A Cauchy matrix is a matrix of the form $\bC = \big(1/(x_i - y_j)\big)_{i,j}$
where $x_i$'s and $y_j$'s are distinct elements of any finite field $\fq$.
Cauchy matrices are superregular by themselves. 
%A matrix $\bV = (\al_j^{i-1})_{i,j}$ where $\al_i$'s are all nonzero and distinct is called a
%Vandermonde matrix. Vandermonde matrices are not always superregular. However, if 
%$\bA = (\al_j^{i-1})_{i,j}$ of size $n \times k$ and $\bB = (\beta_j^{i-1})_{i,j}$
%of size $n \times n$ $n \geq k$, where $\al_i$'s and $\beta_i$'s are nozero and all distict,
%then $\bA\bB^{-1}$ is a superregular matrix~\cite{}. 
The following straightforward results about superregular matrices are especially useful
in our construction.    

\begin{lemma}
\label{lem:superregular}
Let $\bC$ be a superregular $k \times n$ matrix $(k \leq n)$.
Then the following hold.  
\begin{itemize}
  \item Any submatrix of $\bC$ is also superregular. %is again a superregular matrix. 
	\item Any subset of $k'$ rows of $\bC$ $(1 \leq k' \leq k)$
	generates an $[n, k']$ MDS code. Hence, 
	every nontrivial vector generated by these $k'$ rows has weight
	at least $n - k' + 1$. 
	\item Any subset of $n'$ columns of $\bC$ $(1 \leq n' \leq k)$, 
	generates a $[k, n']$ MDS code. Hence, every 
	nontrivial vector generated by these $n'$ columns has weight at least
	$k - n' + 1$. 
\end{itemize}
\end{lemma}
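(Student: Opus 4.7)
The plan is to prove the three items in turn, using only the definition of superregularity together with the classical equivalence that a $k' \times n$ matrix generates an $[n,k']$ MDS code if and only if every one of its $k' \times k'$ minors is nonsingular. All three statements are then essentially bookkeeping.

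For item (i), I would simply observe that every square submatrix of a submatrix $\bC'$ of $\bC$ is itself a square submatrix of $\bC$, hence invertible by hypothesis. So superregularity is automatically inherited by all submatrices, with no further work.

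For item (ii), I would restrict $\bC$ to the chosen $k'$ rows, obtaining a $k' \times n$ matrix $\bC''$ that is superregular by (i). Since every $k' \times k'$ submatrix of $\bC''$ is invertible, the minor-based MDS characterization forces the row space of $\bC''$ to be an $[n,k']$ MDS code, and the weight bound $n - k' + 1$ on nonzero generated vectors is exactly the Singleton bound attained with equality. For item (iii), I would apply the symmetric argument to the $k \times n'$ submatrix obtained from the chosen $n'$ columns. Since transposition preserves superregularity (a square submatrix of $\bC^{\text{T}}$ is the transpose of a square submatrix of $\bC$), every $n' \times n'$ submatrix of this column-restricted matrix is invertible, so its columns span a $[k, n']$ MDS code, giving the stated weight bound $k - n' + 1$ with length $k$ in place of $n$.

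There is no genuine obstacle in this argument; the lemma is a standard consequence of the definition of a superregular matrix and of the minor characterization of MDS codes, appearing essentially in this form already in~\cite{RothLempel1989,LacanFimes2004}. The only point requiring mild care is to keep straight which parameter plays the role of ``length'' and which of ``dimension'' in parts (ii) and (iii), since the ambient matrix is rectangular.
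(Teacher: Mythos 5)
Your proof is correct and is the natural argument. The paper itself supplies no proof of this lemma, introducing the three items only as ``straightforward results,'' and your route --- inheriting superregularity to submatrices for (i), then invoking the minor characterization of MDS codes on the row- and column-restricted matrices for (ii) and (iii), with transposition handling the column case --- is precisely the standard reasoning the authors leave implicit.
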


We now describe our general construction, using the so-called \emph{partial superregular}
matrices. 
\\

\nin\textbf{Main Construction.}
\begin{itemize}
	\item \textbf{Step 1.} Choose any superregular $k \times n$ matrix $\bG''$ and write it
		in the following block form. 
\begin{equation} 
\label{eq:V}
\bG'' = 
\left( \begin{array}{c|c|c}  	
    \smash{\overset{\mu}{\bA''}} & \smash{\overset{k - \mu}{\bB''}} & \smash{\overset{n-k}{\bC''}} \\
		\hline
    \bD & \bE & \bF
  \end{array} \right)\mbox{}^{\overset{k-\mu}{}}_{\underset{\mu}{}}
\end{equation} 
	\item \textbf{Step 2.} Perform elementary row operations to turn the matrix $\bA''$
	at the top-left corner into an all-zero matrix. We can do so by adding certain linear 
	combination of the last $\mu$ rows of $\bG''$ to each of its first $k - \mu$ rows. 
	Note that the $\mu \times \mu$ matrix $\bD$ is invertible, hence its rows can generate
	any vector of length $\mu$. The resulting matrix, referred to as $\bG'$, can be presented in block form as
	below. Since there is no row operation performed on the last $\mu$ rows, 
	the three block submatrices $\bD$, $\bE$, $\bF$ are the same in $\bG''$ and $\bG'$. \\
\begin{equation} 
\label{eq:M}
\bG' = 
\left( \begin{array}{c|c|c}  	
    \smash{\overset{\mu}{\bO{\color{white}'}}} & \smash{\overset{k - \mu}{\bB'}} & \smash{\overset{n-k}{\bC'}} \\
		\hline
    \bD & \bE & \bF
  \end{array} \right)\mbox{}^{\overset{k-\mu}{}}_{\underset{\mu}{}}
\end{equation} 
\item \textbf{Step 3.} Perform elementary row operations on the first $k-\mu$ rows
of $\bG'$ to turn the square submatrix $\bB'$ into a new square matrix $\bB$ of the same
size determined
as follows. It is a block diagonal matrix where each block submatrix is of size 
$p \times p$. Moreover, except from the zero entries, all other entries, which
belong to those block $p\times p$ submatrices, are the same as the corresponding
entries in $\bB''$. Equivalently, $\bB$ can be obtained from $\bB''$ by turning 
those entries that do not belong to any block diagonal submatrix into zero.
We can write $\bB$ in the block form as below, where $\bB''_i$ $(1 \leq i \leq
(k-\mu)/p)$ is the $i$th diagonal block $p \times p$ submatrix of $\bB''$. 
\begin{equation} 
\label{eq:B}
\bB =
\begin{blockarray}{ccccc}
p & p & \cdots & p & \\
\begin{block}{(cccc)c}
  \bB''_1 &  &  &  & p \\
   & \bB''_2 &  &  & p \\
   &  & \ddots &  & \vdots \\
   & &  & \bB''_{\frac{k-\mu}{p}} & p \\
\end{block}
\end{blockarray}
 \end{equation} 
Such transformation can always be done because both $\bB''$ and $\bB$ are invertible. 
Thus, the coding matrix $\bG$, as the output of Step 3, is determined by
\[
\bG = \bT \bG',
\]  
where the transform matrix $\bT$ is \\
\[
\bT = 
\left( \begin{array}{c|c}  	
    \smash{\overset{k-\mu}{\bB\bB'^{-1}}} & \smash{\overset{\mu}{\bO{\color{white}'}}} \\
		\hline
    \bO & \bI_\mu
  \end{array} \right)\mbox{}^{\overset{k-\mu}{}}_{\underset{\mu}{}}
\] 
We can write $\bG$ in block format as follows. 
\begin{figure}[H]
\centering
\includegraphics[scale=0.7]{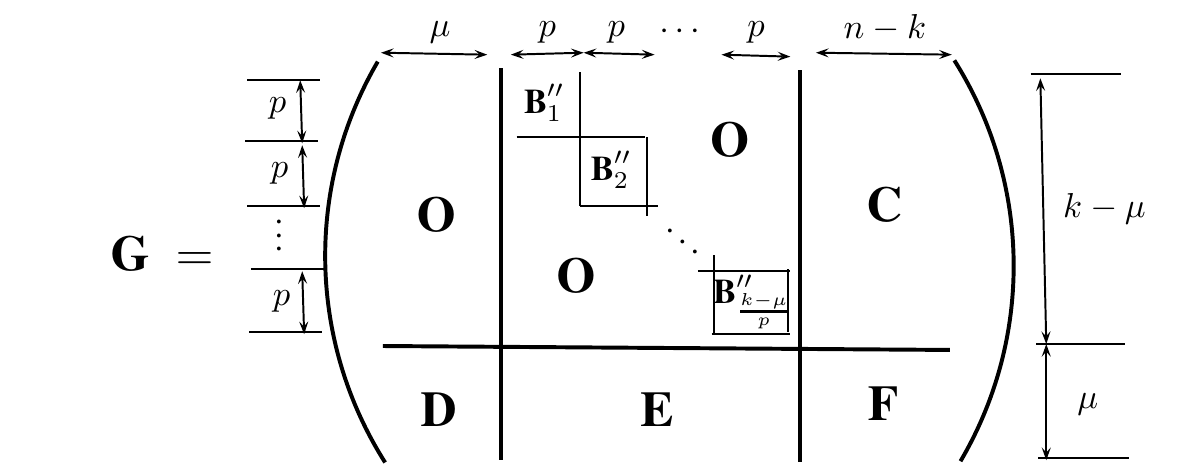}
\end{figure}
where $\bC = \bB\bB'^{-1}\bC'$. 
\end{itemize}

Note that the coding matrix $\bG$ produced by the
Main Construction has the same entries as the original superregular matrix $\bG''$, except 
for those zero entries and entries in block
submatrix $\bC$. Therefore, we often refer to
$\bG$ constructed in this way as \emph{partial}
superregular matrix. We illustrate the steps to construct a partial superregular matrix
in Example~\ref{ex:ps}. 

\begin{example}
\label{ex:ps}
Let $k = 5$, $n = 6$, $q = 11$, $\mu=1$, and $p = 2$. 

\nin\textbf{Step 1.} Let us choose $\bG''$ to be a Cauchy
matrix
\[
\bG''=
\left(
\begin{tabular}{c|cccc|c}
2 & 1 & 6 & 3 & 7 & 9\\
8 & 6 & 4 & 9 & 5 & 2\\
7 & 4 & 3 & 2 & 10 & 8\\
10 & 9 & 2 & 7 & 1 & 5\\
\hline
4 & 5 & 10 & 1 & 9 & 6
\end{tabular}
\right).
\]
\nin\textbf{Step 2.} As the bottom-left entry is nonzero, we can add certain multiple
of the last row to each of the first four
rows of $\bG''$ to obtain
\[
\bG'=
\left(
\begin{tabular}{c|cccc|c}
0 & 4& 1 & 8 & 8 & 6\\
0 & 7 & 6 & 7 & 9 & 1\\
0 & 9 & 2 & 3 & 8 & 3\\
0 & 2 & 10 & 10 & 6 & 1\\
\hline
4 & 5 & 10 & 1 & 9 & 6
\end{tabular}
\right).
\] 
\nin\textbf{Step 3.} Finally, we can perform
certain elementary row operations on the first four rows
to obtain the coding matrix
\[
\bG = 
\left(
\begin{tabular}{c|cccc|c}
0 & 1 & 6 & 0 & 0 & 7\\
0 & 6 & 4 & 0 & 0 & 0\\
0 & 0 & 0 & 2 & 10 & 7\\
0 & 0 & 0 & 7 & 1 & 1\\
\hline
4 & 5 & 10 & 1 & 9 & 6.
\end{tabular}
\right),
\]   
which is proved earlier in Example~\ref{ex:illustration}
to generate a (perfectly) $2$-decodable $1$-secure coding scheme. 
\end{example}

The following lemmas assert that a coding scheme based on a partial superregular matrix,
which is produced by the Main Construction, is indeed
$p$-decodable $\mu$-secure. 

\begin{lemma}
\label{lem:MDS}
The partial superregular matrix $\bG$ produced by
the Main Construction generates an $[n,k]$ MDS
code. Moreover, the last $\mu$ rows of $\bG$
also generates an $[n, \mu]$ MDS code. 
As a consequence, the coding scheme based on $\bG$
has the MDS property and moreover, satisfies (P1) (i.e., being strongly $\mu$-secure). 
\end{lemma}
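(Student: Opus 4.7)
The plan is to show that the Main Construction only applies row-space preserving operations, so the $[n,k]$ MDS property is inherited from $\bG''$, and moreover that the bottom $\mu$ rows are left untouched, so the $[n,\mu]$ MDS property of the last $\mu$ rows is also inherited from $\bG''$. The strong $\mu$-security then follows immediately from the coset-coding discussion in Section~\ref{subsec:coset_code}.

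First, I would observe that Step 2 consists entirely of elementary row operations on $\bG''$, so $\bG'$ and $\bG''$ have the same row space. Next, the transform matrix
\[
\bT =
\left( \begin{array}{c|c}
    \bB\bB'^{-1} & \bO \\
		\hline
    \bO & \bI_\mu
  \end{array} \right)
\]
in Step 3 is block diagonal with invertible diagonal blocks ($\bB$ is invertible because its diagonal blocks $\bB''_i$ are submatrices of the superregular matrix $\bG''$ and hence invertible by Lemma~\ref{lem:superregular}; $\bB'$ is invertible because Step 2 operates only on the first $k-\mu$ rows and leaves the row space of the first $k-\mu$ rows on the middle column block equal to that of $\bB''$, which is invertible). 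Hence $\bT$ is invertible and $\bG = \bT \bG'$ has the same row space as $\bG'$, which in turn equals the row space of $\bG''$. Since $\bG''$ is superregular, Lemma~\ref{lem:superregular} guarantees that its row space is an $[n,k]$ MDS code, and therefore so is the row space of $\bG$.

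For the second claim, I would note that the bottom $\mu$ rows of $\bT$ form $(\bO \mid \bI_\mu)$, so the last $\mu$ rows of $\bG = \bT \bG'$ coincide exactly with the last $\mu$ rows of $\bG'$, which in turn coincide with the last $\mu$ rows $(\bD \mid \bE \mid \bF)$ of $\bG''$ (Step 2 performs no operation on these rows). By Lemma~\ref{lem:superregular} applied to the superregular matrix $\bG''$, any $\mu$ of its rows generate an $[n,\mu]$ MDS code.

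Finally, the MDS property of $\bG$ yields the file-recovery guarantee of the coding scheme, and the $[n,\mu]$ MDS property of the last $\mu$ rows is precisely the nested MDS condition identified in Section~\ref{subsec:coset_code} as sufficient for strong $\mu$-security, which is (P1). The proof contains no real obstacle; the only point that requires a moment of care is verifying that $\bB'$ is invertible, which I would justify by tracking Step 2 and observing that the middle $k-\mu$ columns of the top $k-\mu$ rows span the same space before and after the row operations (so the resulting square block $\bB'$ inherits invertibility from $\bB''$).
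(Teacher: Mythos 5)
Your proof is correct and takes essentially the same route as the paper's: $\bG$ is obtained from $\bG''$ by invertible row operations, so it generates the same $[n,k]$ MDS code, and since Steps~2 and~3 leave the last $\mu$ rows untouched, those rows still equal the last $\mu$ rows of the superregular $\bG''$ and hence generate an $[n,\mu]$ MDS code by Lemma~\ref{lem:superregular}. One small inaccuracy worth flagging: your justification that $\bB'$ is invertible---that ``the middle $k-\mu$ columns of the top $k-\mu$ rows span the same space before and after the row operations''---does not actually hold for Step~2, since adding multiples of the bottom rows to the top rows does change the restriction of the top rows to the middle column block; the clean reason is that the first $k$ columns of $\bG'$ form an invertible $k\times k$ matrix (they equal an invertible transform applied to the first $k$ columns of the superregular $\bG''$), and that matrix has block form $\left(\begin{smallmatrix}\bO & \bB'\\ \bD & \bE\end{smallmatrix}\right)$ with $\bD$ invertible, forcing $\det\bB'\neq 0$. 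This does not affect the conclusion.
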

\begin{proof}
As $\bG''$ generates an $[n,k]$ MDS code and $\bG$ is obtained
from $\bG''$ by applying only elementary row operation, $\bG$
generates the same $[n, k]$ MDS code. 
Since the last $\mu$ rows of $\bG$ are the same as those of $\bG''$, 
they also generates an $[n, \mu]$ MDS code. 
Hence (P1) is satisfied. 
\end{proof}

\begin{lemma}
\label{lem:decodability}
An erasure coding scheme based on a partial superreregular matrix $\bG$ (Main Construction)
always satisfies (P2): each subset of $p$ file symbols 
$\{s_{rp+1}, s_{rp+2}, \ldots, s_{(r+1)p}\}$ $(0 \leq r \leq (k-\mu)/p - 1)$
	can be reconstructed from the content of some $\mu + p$ storage nodes.
\end{lemma}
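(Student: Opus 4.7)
The plan is to exhibit, for each $r \in \{0, 1, \ldots, (k-\mu)/p - 1\}$, an explicit set of $\mu + p$ storage nodes whose contents determine the block $(s_{rp+1}, \ldots, s_{(r+1)p})$. The natural candidate set is the union of the first $\mu$ nodes together with the $p$ nodes indexed by $\mu + rp + 1, \mu + rp + 2, \ldots, \mu + (r+1)p$, i.e.\ the columns of $\bG$ that hit the $(r+1)$-st diagonal block $\bB''_{r+1}$ of the block-diagonal submatrix $\bB$.

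First I would examine the first $\mu$ coordinates of the codeword $\bc = (\bs \mid \br)\bG$. By construction (Step~2), the top-left $(k-\mu) \times \mu$ block of $\bG$ is zero, so these coordinates equal $\br \bD$, where $\bD$ is the bottom-left $\mu \times \mu$ block inherited unchanged from the superregular matrix $\bG''$. Since every square submatrix of $\bG''$ is invertible, $\bD$ is invertible, and the user can recover $\br$ exactly.

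Next, consider the $p$ coordinates of $\bc$ corresponding to the columns $\mu + rp + 1, \ldots, \mu + (r+1)p$ of $\bG$. In the top $k-\mu$ rows, the block diagonal structure of $\bB$ (equation~(\ref{eq:B})) forces all entries in these columns to vanish except in rows $rp+1, \ldots, (r+1)p$, where they form the $p \times p$ block $\bB''_{r+1}$. Denoting by $\bE_{r+1}$ the $\mu \times p$ submatrix of $\bE$ indexed by the same columns, the $p$ coordinates under consideration are exactly
\[
(s_{rp+1}, \ldots, s_{(r+1)p})\,\bB''_{r+1} \;+\; \br\,\bE_{r+1}.
\]
Since $\br$ was already recovered from the first $\mu$ nodes, the user subtracts $\br\,\bE_{r+1}$ to isolate $(s_{rp+1}, \ldots, s_{(r+1)p})\,\bB''_{r+1}$. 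As $\bB''_{r+1}$ is a $p \times p$ submatrix of the superregular matrix $\bG''$, Lemma~\ref{lem:superregular} guarantees it is invertible, so the target block of file symbols is uniquely determined.

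There is no real obstacle here; the only point requiring care is making sure the algebra lines up with the block picture of $\bG$ given just before the statement, so the write-up should simply make the column-accounting explicit and then invoke invertibility of $\bD$ and $\bB''_{r+1}$ (both instances of the superregularity of $\bG''$) to close the argument.
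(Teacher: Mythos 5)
Your proof is correct and follows essentially the same approach as the paper: you select the same set of $\mu+p$ nodes (the first $\mu$ plus those indexed by $\mu+rp+1,\ldots,\mu+(r+1)p$) and reduce the claim to the invertibility of $\bD$ and $\bB''_{r+1}$, both guaranteed by superregularity of $\bG''$. The only difference is presentational: you recover $\br$ from the first $\mu$ coordinates and then peel off $\br\bE_{r+1}$ before inverting $\bB''_{r+1}$, whereas the paper packages the same elimination as a single linear combination $(c_1,\ldots,c_{\mu+p})(\bu;\bv)$ with $\bv$ a column of $(\bB''_1)^{-1}$ and $\bu$ solving $\bD\bu=-\bE_1\bv$.
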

\begin{proof}
Recall that the file $\bs$ is encoded into $\bc = (\bs \mid \br)\bG$. 
We prove that the first $p$ file symbols $\{s_1, s_2, \ldots, s_p\}$ can be 
reconstructed from the first $\mu + p$ coordinates of $\bc$. 
In general, it can be proved in a similar manner that for each $r$ 
where $0 \leq r \leq (k-\mu)/p - 1$, the $p$ file symbols 
$\{s_{rp+1}, s_{rp+2}, \ldots, s_{(r+1)p}\}$
can be reconstructed from $\mu + p$ coordinates of $\bc$, namely
$c_1, \ldots, c_\mu$, and $c_{\mu + rp + 1}, \ldots, c_{\mu + (r+1)p}$. 

The coding matrix $\bG$ produced by the Main Construction can
be presented in the block form as follows. 
\[ 
\bG = 
\left(
\begin{tabular}{c|cccc|c}
\multirow{4}{*}{$\bO$} & $\bB''_1$ & & & & \multirow{4}{*}{$\bC$}\\
& & $\bB''_2$ & & & \\
& & & $\ddots$ & & \\
& & &  & $\bB''_{(k-1)/p}$ & \\
\hline
$\bD$ & $\bE_1$ & $\bE_2$ & $\cdots$ & $\bE_{(k-1)/p}$ & $\bF$
\end{tabular}
\right),
\]
where each $\bE_i$ is a $\mu\times p$ submatrix of $\bE$ such that 
$\bE = \big(\bE_1 \mid \bE_2 \mid \cdots \mid \bE_{(k-1)/p}\big)$. 
According to the structure of $\bG$, the first $\mu + p$ coordinates of $\bc$ can be written as
\[
(c_1 \cdots c_{\mu+p}) = (s_1 \cdots s_p \mid r_1 \cdots r_\mu)\bH_1,
\]
where 
\[
\bH_1 = 
\left(\begin{array}{c|c}  	
    \smash{\overset{\mu}{\bO{\color{white}'}}} & \smash{\overset{p}{\bB''_1}} \\
		\hline
    \bD & \bE_1
  \end{array}\right)\mbox{}^{\overset{p}{}}_{\underset{\mu}{}}.
\]
Since $\bG''$ is superregular, both $\bB''_1$ and $\bD$ are invertible. Let $\bv \in \fq^p$ be the $i$th column in the inverse
of $\bB''_1$ $(1 \leq i \leq p)$. Moreover, let $\bu \in \fq^\mu$ be the column vector satisfying $\bD \bu = -\bE_1 \bv$. 
Then
\[
\bH_1 
\left(\begin{array}{c}  	
    \bu \\
		\hline
    \bv 
  \end{array}\right)
	=
	\left(\begin{array}{c}  	
    \bB''_1\bv \\
		\hline
    \bD\bu + \bE_1\bv 
  \end{array}\right)
	=
	\left(\begin{array}{c}  	
    \be_i \\
		\hline
    \bzero 
  \end{array}\right),
\] 
where 
\[
\be_i = (\underbrace{0,\ldots,0}_{i-1},1,\underbrace{0,\ldots,0}_{p-i})^{\text{T}}. 
\]
Therefore, the user can reconstruct $s_i$ as follows.
\[
\begin{split}
s_i &= (s_1 \cdots s_p \mid r_1 \cdots r_\mu)
\left(\begin{array}{c}  	
    \be_i \\
		\hline
    \bzero 
  \end{array}\right)\\
	&= (s_1 \cdots s_p \mid r_1 \cdots r_\mu)\bH_1 
	\left(\begin{array}{c}  	
    \bu \\
		\hline
    \bv 
  \end{array}\right)\\
	&= (c_1 \cdots c_{\mu+p})
	\left(\begin{array}{c}  	
    \bu \\
		\hline
    \bv 
  \end{array}\right). 	
\end{split}
\] 
As $i$ can be chosen arbitrarily between $1$ and $p$, 
we conclude that an user which has access to the first
$\mu+p$ coordinates of $\bc$ can reconstruct all
$s_i$ for $1 \leq i \leq p$. The proof follows. 
\end{proof}
 
\begin{theorem}
\label{thm:imperfect}
The Main Construction produces a coding matrix that generates a $p$-decodable $\mu$-secure coding scheme
for all $0 \leq \mu < k$ and $1 \leq p \leq k - \mu$, 
$p | (k - \mu)$. 
\end{theorem}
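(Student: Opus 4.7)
The plan is essentially to observe that Theorem~\ref{thm:imperfect} is an immediate packaging of the two preceding lemmas, together with the fact that the partial superregular matrix $\bG$ produced by the Main Construction is a legitimate coding matrix in the sense of Section~\ref{subsec:coset_code} (i.e., it is a $k \times n$ generator matrix of an $[n,k]$ MDS code). So I do not expect any genuine new work, just a clean stitching.

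First I would recall the definition of a $p$-decodable $\mu$-secure coding scheme (Definition~\ref{def:security_decodability}): it requires an MDS coding matrix $\bG$ satisfying (P1) strong $\mu$-security, and (P2) $p$-decodability of each block $\{s_{rp+1}, \ldots, s_{(r+1)p}\}$ from some $\mu + p$ storage nodes. I would then invoke Lemma~\ref{lem:MDS}, which says that the $\bG$ output by the Main Construction generates an $[n,k,n-k+1]_q$ MDS code (since $\bG$ and the superregular $\bG''$ differ only by elementary row operations, they span the same row space), and that the bottom $\mu$ rows of $\bG$ — being identical to the bottom $\mu$ rows of $\bG''$ — generate an $[n,\mu]$ MDS code. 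The standard coset coding argument (exactly as described at the end of Section~\ref{subsec:coset_code}) then yields (P1), and the MDS property guarantees any $k$ nodes can recover the file.

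Next I would invoke Lemma~\ref{lem:decodability}, which directly verifies (P2): for each $r$ with $0 \leq r \leq (k-\mu)/p - 1$, the block $\{s_{rp+1}, \ldots, s_{(r+1)p}\}$ can be recovered from the coordinates $c_1, \ldots, c_\mu, c_{\mu+rp+1}, \ldots, c_{\mu+(r+1)p}$, using invertibility of the block $\bB''_{r+1}$ together with invertibility of $\bD$ (both inherited from the superregularity of $\bG''$). Since $p \mid (k-\mu)$ by hypothesis, the blocks $\bB''_1, \ldots, \bB''_{(k-\mu)/p}$ form a genuine block-diagonal decomposition, so every file symbol is covered by exactly one block and the induction on $r$ is uniform.

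Combining the two lemmas with Definition~\ref{def:security_decodability} closes the proof. There is no real obstacle here — the entire technical content has been absorbed into Lemmas~\ref{lem:MDS} and~\ref{lem:decodability}; the only thing one needs to be a little careful about is to check that the hypothesis $p \mid (k-\mu)$ is actually used (it ensures that the matrix $\bB$ in~(\ref{eq:B}) is well-defined as a block-diagonal matrix with $(k-\mu)/p$ blocks of equal size $p$), and that the ranges $0 \leq \mu < k$ and $1 \leq p \leq k-\mu$ are consistent with the block dimensions in~(\ref{eq:V}) and~(\ref{eq:M}), which they are.
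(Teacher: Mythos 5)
Your proposal is correct and follows essentially the same route as the paper: the theorem is proved by directly combining Lemma~\ref{lem:MDS} (giving the MDS property and (P1)) with Lemma~\ref{lem:decodability} (giving (P2)) and reading off Definition~\ref{def:security_decodability}. The extra remarks you add about where the hypothesis $p \mid (k-\mu)$ enters and why the row-operation construction preserves the MDS row space are accurate glosses, not a different argument.
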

\begin{proof}
According to Lemma~\ref{lem:MDS} and Lemma~\ref{lem:decodability}, the Main Construction
produces a coding matrix that generates a coding scheme satisfying
both (P1) and (P2). Hence, such scheme is
$p$-decodable $\mu$-secure, according to Definition~\ref{def:security_decodability}. 
\end{proof}

\section{On Perfectly $p$-Decodable $\mu$-Secure Coding Schemes}
\label{sec:perfect}

In this section, we first prove that a coding matrix produced by the Main Construction 
in Section~\ref{subsec:general} also satisfies (P3) when $p = 1$ $(\forall \mu)$ 
and when $\mu = 0, 1$ $(\forall p)$.
Finally, we establish the existence of perfectly $p$-decodable $\mu$-secure
coding schemes over sufficiently large fields for almost every $p$ and $\mu$
(namely, $k \geq 2(\mu+p)-1$).   

\subsection{The Case $p = 1$}
\label{subsec:p=1}

A (perfectly) $1$-decodable $\mu$-secure coding scheme is the best scheme among all strongly $\mu$-secure
schemes in terms of partial decoding. Such a scheme allows the user to reconstruct one file
symbol right after the user accesses one more node than the security threshold $\mu$. 
It is a sharp turn from knowing nothing about the file to being able to reconstruct one file symbol. 
In fact, according to Lemma~\ref{lem:decodability}, after accessing the first $\mu$ nodes, 
accessing any additional node would give the user one new file symbol. Hence, beyond the
threshold $\mu$, the coding scheme works in a similar manner as the conventional systematic coding scheme. 
Note that when $p=1$ and $\mu=0$, a (perfectly) $1$-decodable $0$-secure coding scheme is nothing
other than a normal systematic coding scheme. 

\begin{lemma}
When $p = 1$, the Main Construction produces a matrix $\bG$ that generates a (perfectly) $1$-decodable $\mu$-secure
coding scheme for any $\mu \geq 0$. 
\end{lemma}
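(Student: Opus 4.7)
My plan is to derive this lemma directly from Theorem~\ref{thm:imperfect} together with the definitions, with essentially no new work beyond what the Main Construction already provides. By Theorem~\ref{thm:imperfect}, the matrix $\bG$ produced by the Main Construction already satisfies (P1) (strong $\mu$-security) and (P2) ($1$-decodability). So in order to upgrade ``$1$-decodable $\mu$-secure'' to ``perfectly $1$-decodable $\mu$-secure'', the only remaining property to check is (P3), which for $p=1$ requires weak $(\mu+p-1) = \mu$-security.

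The key step is the observation that strong $\mu$-security, as guaranteed by Lemma~\ref{lem:MDS}, automatically implies weak $\mu$-security. Concretely, for any $E \subseteq [n]$ with $|E| \le \mu$, the identity $\en(\bS \mid \{C_i : i \in E\}) = \en(\bS)$ forces the full file $\bS = (S_1, \ldots, S_{k-\mu})$ to be statistically independent of the observed codeword coordinates $\{C_i : i \in E\}$. Independence of the joint random vector immediately entails independence of each of its components, so that $\en(S_j \mid \{C_i : i \in E\}) = \en(S_j)$ for every $j \in [k-\mu]$ and every such $E$. This is precisely Definition~\ref{def:weak_security} at threshold $\mu$, i.e., (P3) in the case $p = 1$.

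Alternatively, the same conclusion can be read off from the informal description of (P3): it quantifies over adversaries whose strength lies strictly between $\mu$ and $\mu + p$, and for $p=1$ this open range contains no integers, so (P3) is vacuously true. Either way, no new combinatorial or linear-algebraic argument is required, and I do not expect any genuine obstacle; the entire substance is already captured by Theorem~\ref{thm:imperfect}, and the present lemma is essentially a one-line corollary combined with the standard ``strong implies weak'' entropy argument.
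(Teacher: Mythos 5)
Your proof is correct and takes essentially the same approach as the paper, whose proof consists of the single assertion that ``(P3) is satisfied trivially'' when $p=1$; your argument simply supplies the reasoning left implicit there, namely that for $p=1$ property (P3) reduces to weak $\mu$-security, which is an immediate consequence of the strong $\mu$-security already given by (P1). One minor caveat: your alternative ``vacuous range'' reading is based on the informal description of (P3) in the abstract, whereas the formal Definition~\ref{def:security_decodability} demands weak $(\mu+p-1)$-security over \emph{all} $|E|\le\mu$ (not merely $|E|>\mu$), so the first (strong-implies-weak) argument is the one that actually discharges the formal requirement.
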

\begin{proof}
When $p=1$, (P3) is satisfied trivially. Hence, the Main Construction yields a coding scheme
satisfying simultaneously (P1), (P2), and (P3). Such a coding scheme, according to Definition~\ref{def:security_decodability}, 
is perfectly $1$-decodable $\mu$-secure. 
\end{proof}

\subsection{The Case $\mu = 0$}
\label{subsec:muy=0}

A perfectly $p$-decodable $0$-secure coding scheme is of particular interest because of the following
properties.
\begin{itemize}
	\item \textbf{Weak security with no overhead on storage capacity:} the scheme provides
	weak security against an adversary which can access up to $p-1$ storage nodes. Such an adversary 
	cannot reconstruct any part of the stored file. Moreover, no storage overhead occurs
	compared to a normal erasure coding scheme: the file size is $k$ and the code is an $[n,k]$ MDS
	code. That is because there are no random symbols employed in the scheme.
	\item \textbf{$p$-partial decodability:} the user can reconstruct each subset of $p$ file 
	symbols by accessing certain $p$ storage nodes. 
\end{itemize}

\begin{lemma}
When $\mu=0$, the Main Construction produces a matrix $\bG$ that generates a perfectly $p$-decodable $0$-secure
coding scheme for any $p \geq 1$, $p | k$. 
\end{lemma}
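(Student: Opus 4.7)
The plan is to verify (P1), (P2), and (P3) in turn for the coding matrix $\bG$ produced by the Main Construction when $\mu = 0$. Property (P1) is vacuous since the strong security threshold is zero, and (P2) is given by Lemma~\ref{lem:decodability}. All the content therefore lies in (P3), i.e., weak $(p-1)$-security. By Lemma~\ref{lem:basic}, it suffices to show that for every $J \subseteq [n]$ with $|J| \leq p-1$ and every $i \in [k]$, the unit vector $\be_i$ is not in the column span of $\bG_J$; equivalently, every $\boldsymbol{\gamma} \in \fq^n$ with $\bG\boldsymbol{\gamma} = \be_i$ must have Hamming weight at least $p$.

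When $\mu = 0$ the Main Construction collapses to $\bG = \bB(\bB'')^{-1}\bG''$, so $\ker(\bG) = \ker(\bG'')$; and since $\bG''$ generates an $[n,k,n-k+1]$ MDS code, this right kernel (the dual code) has minimum weight $k+1$. I would then introduce the canonical solution $\boldsymbol{\beta} = (\bB^{-1}\be_i,\, \bzero) \in \fq^n$, which satisfies $\bG\boldsymbol{\beta} = \be_i$. Because $\bB$ is block diagonal with superregular $p\times p$ blocks $\bB''_1,\ldots,\bB''_{k/p}$, the vector $\bB^{-1}\be_i$ is supported entirely inside the single block containing $i$; moreover, by the adjugate formula each entry of $(\bB''_t)^{-1}$ is the ratio of the determinant of a $(p-1)\times(p-1)$ submatrix of $\bB''_t$ to $\det\bB''_t$, both nonzero by superregularity of $\bG''$, so $|\mathrm{supp}(\boldsymbol{\beta})| = p$ exactly. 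In the main case $p < k$ (hence $p \leq k/2$ since $p\mid k$), any other solution $\boldsymbol{\gamma}$ differs from $\boldsymbol{\beta}$ by a nonzero element of $\ker(\bG)$ of weight at least $k+1$, and the triangle inequality yields $|\mathrm{supp}(\boldsymbol{\gamma})| \geq (k+1) - p \geq p + 1$; while $\boldsymbol{\gamma} = \boldsymbol{\beta}$ itself has weight exactly $p$. Either way $|\mathrm{supp}(\boldsymbol{\gamma})| \geq p$, which completes (P3) in this regime.

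The remaining edge case $p = k$ must be handled separately, because the block structure degenerates to a single $k\times k$ block, $\bG = \bG''$ is itself superregular, and the dual-distance bound $(k+1)-p = 1$ becomes vacuous. Here I would argue directly from superregularity: if $\bG_J\boldsymbol{\alpha} = \be_i$ with $\boldsymbol{\alpha}\neq\bzero$ and $|J|\leq k-1$, then restricting to any $|J|$ of the $k-1$ rows indexed by $[k]\setminus\{i\}$ yields a $|J|\times|J|$ submatrix of $\bG$ whose columns admit a nontrivial linear dependency with coefficients $\boldsymbol{\alpha}$, contradicting the invertibility of every square submatrix of $\bG$. The principal obstacle I foresee is exactly this seam between the two regimes: the MDS-duality bound that carries the proof for $p<k$ degrades precisely at $p=k$, where one must fall back on the stronger superregularity of $\bG$ itself that is only available when the construction has a single block. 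A subtler but essential ingredient is the adjugate observation that inverses of superregular matrices have no zero entries; without it, the canonical $\boldsymbol{\beta}$ could have weight strictly less than $p$ and sabotage (P3).
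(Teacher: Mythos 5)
Your proof is correct, but it takes a genuinely different route from the paper's. The paper argues by contradiction at the level of the primal matrix: assuming a set $L$ of $p-1$ columns of $\bG$ generates $\be_1$, it observes that the first $p$ columns (the block $\bB''_1$) also generate $-\be_1$, combines the two into a zero linear combination of at most $2p-1 < k$ columns, and shows this combination is nontrivial because every coefficient $\beta_j$ in the $\bB''_1$-combination must be nonzero (else $p-1$ columns of the superregular $\bB''_1$ would generate a weight-$1$ vector) and $|L| = p-1 < p$ forces some $j_0 \in [p]\setminus L$. This contradicts MDS independence of any $\leq k$ columns. You instead work with the \emph{dual} perspective: you identify a canonical preimage $\boldsymbol{\beta}$ of $\be_i$ of weight exactly $p$ (nonzeroness of the entries of $(\bB''_t)^{-1}$ via the adjugate formula), note that $\ker\bG = \ker\bG''$ has minimum weight $k+1$ since $\bG''$ is MDS, and conclude by the triangle inequality that every preimage has weight $\geq (k+1) - p \geq p+1$ when $k \geq 2p$. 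The two proofs rest on the same MDS structure seen from complementary sides: your ``inverse of a superregular block has no zero entries'' is exactly dual to the paper's ``no $p-1$ columns of $\bB''_1$ generate a weight-$1$ vector,'' and your dual-code minimum distance plays the role of the paper's primal-column independence. Both proofs isolate the degenerate case $p = k$ and fall back on plain superregularity of $\bG = \bG''$ there. Your weight-bound formulation is arguably cleaner because it makes the quantitative slack $(k+1) - p - (p-1) = k - 2p + 2 \geq 2$ visible, but the paper's contradiction argument generalizes more directly to the $\mu = 1$ case treated next, which is likely why the authors chose it.
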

\begin{proof}
Due to Lemma~\ref{lem:MDS} and Lemma~\ref{lem:decodability}, it suffices to prove that the coding matrix $\bG$ produced by the Main Construction 
generates a coding scheme satisfying (P3) - weak security. 
According to Lemma~\ref{lem:basic}, we aim to show that any set of $p-1\ (= \mu + p - 1)$ columns of $\bG$
does not generate an unit vector of weight one.
As $p | k$, we consider the following two cases: $k = p$ and $k \geq 2p$.

If $k=p$, as $\mu = 0$, the coding matrix $\bG$ is simply the same as the input superregular
matrix $\bG''$. 
By Lemma~\ref{lem:superregular}, any set of $p-1$ columns of $\bG$
generates a $[p,p-1,2]$ MDS code, hence never generates an 
unit vector $\be_i$, which has weight less than two. 
Therefore, according to Lemma~\ref{lem:basic}, the coding scheme
based on $\bG$ is weakly secure against an adversary which can access at most $p-1$ nodes. 
Hence (P3) is satisfied. 
%In fact, in this case, the scheme provides no partial decodability: 
%an user cannot reconstruct any part of the file by accessing less than $k$ nodes. 

We now assume that $k \geq 2p$. 
When $\mu = 0$, the matrix $\bG$ has the following form. 
\[
\bG = 
\left(
\begin{tabular}{cccc|c}
$\bB''_1$ & & & & \multirow{4}{*}{$\bC$}\\
& $\bB''_2$ & & & \\
& & $\ddots$ & & \\
& &  & $\bB''_{k/p}$ & 
\end{tabular}
\right).
\] 
We assume, by contradiction, that some set $L$ of $p-1$ columns of $\bG$
generates an unit vector $\be_i \in \fq^k$. Without loss of generality, we can assume that $i = 1$. For simplicity, we slightly abuse the notation and also use $L$ to
denote the set of indices of the columns in $L$. 
Then there exist some coefficients $\al_j \in \fq$ $(j \in L)$, so that
\begin{equation} 
\label{eq:1}
\sum_{j \in L} \al_j \bG[j] = \be_1 =
\left(
\begin{tabular}{c}
1\\
0\\
\vdots\\
0
\end{tabular}
\right)
,
\end{equation} 
where $\bG[j]$ denotes the $j$th column of $\bG$. 
Note that the $p \times p$ block matrix $\bB''_1$ is a
square submatrix of a superregular matrix $\bG''$,
and hence is invertible.
Therefore, there exists a linear combination of the first $p$ columns of $\bG$ that generate
the vector $-\be_1$. 
In other words, there exist some coefficients $\beta_j \in \fq$ $(j \in [p])$, such that
\begin{equation} 
\label{eq:2}
\sum_{j = 1}^p \beta_j \bG[j] = -\be_1 =
\left(
\begin{tabular}{c}
-1\\
0\\
\vdots\\
0
\end{tabular}
\right). 
\end{equation} 
By (\ref{eq:1}) and (\ref{eq:2}), we deduce that
\begin{equation} 
\label{eq:3}
\sum_{j = 1}^p \beta_j \bG[j]  + \sum_{j \in L} \al_j \bG[j] = \bzero.
\end{equation} 
Hence, there exists a linear combination of these at most $(2p-1)$ columns of $\bG$
that is equal to $\bzero$. Note that we assume $k \geq 2p$. Moreover, according to Lemma~\ref{lem:MDS}, 
$\bG$ is an $[n, k]$ MDS code. Hence, any set of at most $k$ columns of $\bG$ must be
linear independent. We obtain below a contradiction by arguing that the 
linear combination of at most $2p - 1 < k$ columns in (\ref{eq:3}) is \emph{nontrivial}, 
that is, their coefficients are not identically zero. Then the proof would follow.  

First, according to Lemma~\ref{lem:superregular},
as $\bB''_1$ is superregular, any $p-1$ columns of $\bB''_1$ form
a $[p, p - 1, 2]$ MDS code. In other words, any $p-1$ columns of $\bB''_1$ does not generate
a nonzero vector of weight less than two. Combining this fact with (\ref{eq:2}), we deduce
that $\beta_j \neq 0$ for all $j \in [p]$. Therefore, in the linear combination
(\ref{eq:3}), while the first sum consists of $p$ columns of $\bG$ with all nonzero coefficients, 
the second sum is a linear combination of $p-1$ columns of $\bG$. Hence, there must be at least
one term in the first sum that cannot be canceled out. Thus, (\ref{eq:3}) is a \emph{nontrivial} linear
combination of less than $k$ columns of $\bG$. This conclusion
contradicts the fact that any $k$ or less columns of $\bG$ must be linearly independent. 
\end{proof}

\subsection{The Case $\mu = 1$}
\label{subsec:mu=1}

\begin{lemma}
When $\mu=1$, the Main Construction produces a matrix $\bG$ that generates a perfectly $p$-decodable $1$-secure
coding scheme for any $p \geq 1$, $p | (k-1)$. 
\end{lemma}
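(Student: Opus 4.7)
The plan is to follow the same template as the proof of the $\mu = 0$ case in Section~\ref{subsec:muy=0}, adapted to accommodate the single random row. Properties (P1) and (P2) are immediate from Lemma~\ref{lem:MDS} and Lemma~\ref{lem:decodability}, so only the weak $p$-security in (P3) needs work. By Lemma~\ref{lem:basic}, I would assume for contradiction that some $p$-subset $L \subseteq [n]$ satisfies $\sum_{j \in L} \al_j \bG[j] = \be_i$ for some $i \in [k-1]$, with all $\al_j \neq 0$ (else shrink $L$), and let $r \in \{1,\ldots,(k-1)/p\}$ be the unique block index with $(r-1)p < i \leq rp$.

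The key move is to manufacture a second, canonical representation of $\be_i$ using the $p+1$ columns indexed by $\{1\} \cup R_r$, where $R_r$ is the $r$-th block of middle columns of $\bG$. The coefficients $\{\beta_j\}_{j \in R_r}$ are the unique solution of $\bB''_r \bbt = \be_{i'}$ with $i' = i - (r-1)p$, and the coefficient on column~$1$ is $\beta_1 = -\bD^{-1}\bE_r\bbt$, chosen precisely to cancel the contribution of $R_r$ in the last coordinate. I would prove that every $\beta_j$ is nonzero. Those with $j \in R_r$ are nonzero because any $p-1$ columns of the superregular $p \times p$ matrix $\bB''_r$ generate a $[p,\,p-1,\,2]$ MDS code and so cannot produce a weight-one vector; and $\beta_1 \neq 0$ because, by a cofactor expansion, $\bE_r \bbt$ equals up to a nonzero scalar a $p \times p$ minor of the $(p+1) \times p$ submatrix of the superregular $\bG''$ obtained by stacking $\bE_r$ beneath $\bB''_r$, and this minor is invertible by superregularity. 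This nonvanishing of $\beta_1$ is the genuinely new ingredient compared with the $\mu = 0$ proof.

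Subtracting the two representations produces a vanishing linear combination of columns indexed by $L \cup \{1\} \cup R_r$, and I would split into two sub-cases. If $L \subseteq \{1\} \cup R_r$, then $L$ omits exactly one index from this $(p+1)$-set, the omitted $\beta$-coefficient is nonzero, so the difference is a nontrivial combination of at most $p+1 \leq k$ columns, and the MDS property of $\bG$ forces a contradiction. If $L \not\subseteq \{1\} \cup R_r$, pick any $j_0 \in L \setminus (\{1\} \cup R_r)$; the coefficient $\al_{j_0}$ at $j_0$ survives in the difference and is nonzero, so the combination is nontrivial and involves at most $2p+1$ columns. Provided $k \geq 2p+1$, i.e.\ $(k-1)/p \geq 2$, MDS again finishes the job.

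The main obstacle, paralleling the $\mu = 0$ case, is the boundary regime $(k-1)/p = 1$, i.e.\ $p = k-1$, where $\{1\} \cup R_r = [k]$ and the counting bound $2p+1 \leq k$ fails. I expect to handle this sub-case separately by a finer analysis: rewrite each outside column $\bG[j]$ with $j > k$ in the basis $\{\bG[j']\}_{j' \in [k]}$ afforded by the MDS property, use uniqueness of representation in this basis together with the already-established nonvanishing of every $\beta_{j'}$ to pin down the candidate relation, and finally rule it out via a further superregularity argument on a specific minor of $\bG''$. This edge case is by far the most delicate part of the argument and is where I expect the bulk of the work.
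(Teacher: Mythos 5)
Your argument for the main case $(k-1)/p \geq 2$ is correct and follows essentially the same skeleton as the paper's: build a canonical representation of $\be_i$ from column $1$ together with the $r$-th block of middle columns, show the block coefficients are nonzero via the $[p,p-1,2]$ MDS property, and force a nontrivial $\le 2p+1$-column dependence. The one place you deviate is that you also prove $\beta_1 \neq 0$ (via the Cramer/cofactor argument identifying $\bE_r\boldsymbol{\beta}^{\text{T}}$ with a nonzero $p\times p$ minor of the stacked submatrix $(\bB''_r;\bE_r)$ of $\bG''$); the paper instead establishes $L\neq J$ directly by observing that the columns indexed by $J$ project onto a $[p+1,p,2]$ MDS code and hence cannot produce a weight-one vector, so a witness $j_0\in J\setminus L$ always exists without needing $\gamma\neq0$. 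Both routes are sound, and your $\beta_1\neq0$ step is correct; it just does a bit of extra work.

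The genuine gap is in the boundary case $p=k-1$. You propose to carry out a ``finer analysis'' of the Main Construction's output matrix $\bG$, but this cannot succeed, because in this regime $\bG$ is in general \emph{not} weakly $p$-secure. When $\mu=1$ and $p=k-1$ the output has the block form $\bG = \bigl(\,\bzero \mid \bB'' \mid \bC\,\bigr)$ on the first $k-1$ rows with $(d\mid\bE\mid\bF)$ on the last row, where $\bC = \bB\bB'^{-1}\bC'$ is a derived matrix, not a submatrix of $\bG''$, and carries no inherited superregularity. A $p$-column set $L$ that reaches into the $\bC$ region can therefore span a unit vector: for $(k,n,q)=(3,4,11)$ with $\bG''$ a Cauchy matrix built from $x=(1,2,3)$, $y=(4,5,6,7)$, the Main Construction yields $\bG$ whose columns $3$ and $4$ span $\be_2$ (concretely, $6\,\bG[3]+3\,\bG[4]=\be_2$ over $\ff_{11}$), so (P3) fails for this $\bG$. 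The paper sidesteps this by \emph{not} using $\bG$ here at all: when $p=k-1$ partial decodability is vacuous, so it takes the coding matrix to be the untouched superregular $\bG''$, for which any $p$ columns already generate a $[k,p,2]$ MDS code and hence cannot yield a weight-one vector. You need the same replacement; no ``further superregularity argument on a specific minor'' can rescue the $\bG$-based statement, because the statement is false.
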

\begin{proof}
Due to Lemma~\ref{lem:MDS} and Lemma~\ref{lem:decodability}, it suffices to prove that the coding matrix $\bG$ produced by the Main Construction 
generates a coding scheme satisfying (P3) - weak security. 
According to Lemma~\ref{lem:basic}, we aim to show that any set of $p\ (= \mu + p - 1)$ columns of $\bG$
does not generate an unit vector of weight one.
As $p | (k-1)$, there are two cases: $k - 1 = p$
and $k - 1 \geq 2p$. 

In the case $k - 1 = p$, there is
no partial decodability and we simply use the superregular
matrix $\bG''$ in Step 1 of the Main Construction, instead
of $\bG$. By Lemma~\ref{lem:superregular}, any set of $p = k-1$
columns of $\bG''$ generates a $[p,p-1,2]$ MDS code, hence
cannot generate an unit vector $\be_i$ of weight one. 
Therefore, the coding scheme based on $\bG''$ is weakly secure against an adversary which
can access at most $p-1$ storage nodes. Hence (P3) is 
satisfied. 

Now we assume that $k - 1\geq 2p$. 
When $\mu = 1$, the matrix $\bG$ has the following form. 
\[
\bG = 
\left(
\begin{tabular}{c|cccc|c}
\multirow{4}{*}{$\bzero$} & $\bB''_1$ & & & & \multirow{4}{*}{$\bC$}\\
& & $\bB''_2$ & & & \\
& & & $\ddots$ & & \\
& & &  & $\bB''_{(k-1)/p}$ & \\
\hline
$d_{1,1}$ & $\bE_1$ & $\bE_2$ & $\cdots$ & $\bE_{(k-1)/p}$ & $\bF$
\end{tabular}
\right),
\]
where $\bzero$ is a $(k-1) \times 1$ all-zero column vector,  $\bD = (d_{1,1})$ is a $1 \times 1$ matrix,
and $\bE_i$ is a $1 \times p$ row vector for each $1 \leq i
\leq (k-1)/p$. Note that $\bE = \big(\bE_1 \mid \bE_2 \mid 
\cdots \mid \bE_{(k-1)/p}\big)$. 

We now assume, by contradiction, that some set $L$ of 
$p$ columns of $\bG$ generates an unit vector
$\be_i \in \fq^k$, for some $i \in [k-1]$. 
Without loss of generality, we can assume that $i = 1$. 
For simplicity, we slightly abuse the notation and also use $L$ to
denote the set of indices of the columns in $L$. 
Then there exist some coefficients $\al_j \in \fq$ 
$(j \in L)$, so that
\begin{equation} 
\label{eq:4}
\sum_{j \in L} \al_j \bG[j] = 
\left(
\begin{tabular}{c}
$1$\\
$0$\\
$\vdots$\\
$0$\\
\hline
$0$\\
$\vdots$\\
$0$
\end{tabular}
\right)
,
\end{equation} 
where $\bG[j]$ denotes the $j$th column of $\bG$. 

Let $J = \{2,3,\ldots,p+1\}$.
We first argue that $L \neq J$. 
Indeed, the matrix
\begin{equation}
\label{eq:5}
\left( 
\begin{tabular}{c}
$\bB''_1$ \\ \hline $\bE_1$
\end{tabular}
\right)
\end{equation}
is a submatrix of the superregular matrix $\bG''$ chosen
in Step 1 of the Main Construction. Therefore, according
to Lemma~\ref{lem:superregular}, its columns generate
a $[p+1, p, 2]$ MDS code. Hence, every nontrivial linear
combination of these $p$ columns has weight at least two. 
As a consequence, any nontrivial linear combination
of $p$ columns $\bG[j]$'s $(j \in J)$, which correspond to the columns of the matrix in (\ref{eq:5}), has weight at least two. 
However, due to (\ref{eq:4}), 
the columns in $L$ can generate the unit vector $\be_1$, 
which has weight one. Thus, $L$ and $J$ must be different.

As $\bG''$ is superregular, $\bB''_1$ is invertible. 
Therefore, there exist some coefficients $\beta_j \in \fq$ $(j \in J)$, such that
\begin{equation} 
\label{eq:6}
\sum_{j \in J} \beta_j \bG[j] = 
\left(
\begin{tabular}{c}
$-1$ \\
$0$\\
$\vdots$\\
$0$\\
\hline
$\bE_1 \boldsymbol{\beta}^{\text{T}}$
\end{tabular}
\right), 
\end{equation} 
where $\boldsymbol{\beta} = (\beta_2, \ldots, \beta_{p+1})$.
According to Lemma~\ref{lem:superregular}, any set of less
than $p$ columns of $\bB''_1$ would never generate a nontrivial vector of weight less than two. 
Therefore, from (\ref{eq:6}), we deduce that $\beta_j \neq
0$ for every $j \in J$. 
Moreover, since $\bG''$ is superregular, $d_{1,1}$ must be nonzero.
Hence, there exists some $\gamma \in \fq$ such that 
$\gamma d_{1,1} = -\bE_1 \boldsymbol{\beta}^{\text{T}}$. Therefore, 
\begin{equation}
\label{eq:7}
\gamma \bG[1] =  
\left(
\begin{tabular}{c}
$0$ \\
$0$\\
$\vdots$\\
$0$\\
\hline
$-\bE_1 \boldsymbol{\beta}^{\text{T}}$
\end{tabular}
\right),
\end{equation}
From (\ref{eq:4}), (\ref{eq:6}), and (\ref{eq:7}) we have
\begin{equation}
\label{eq:8}
\sum_{j \in L} \al_j \bG[j] + \sum_{j \in J} \beta_j \bG[j]
+ \gamma \bG[1] = \bzero. 
\end{equation}
Note that the left-hand side of (\ref{eq:8}) is a linear
combination of at most $2p + 1\ (\leq k)$ columns of $\bG$.
We argue earlier that $\beta_j \neq 0$ for every
$j \in J$ and that $L \neq J$. 
Let $j_0 \in J \setminus L$. Then, the coefficient of $
\bG[j_0]$ in the linear combination of (\ref{eq:8}) is 
$\beta_{j_0} \neq 0$. Hence this linear combination is
nontrivial.   
Therefore, we obtain a contradiction, since $\bG$ generates
an $[n,k]$ MDS code and at the same time, has some set of
at most $k$ columns that are linearly dependent. 
\end{proof}
\vskip 10pt 

We summarize the results established so far on perfectly
$p$-decodable $\mu$-secure codes in the following theorem. 

\vskip 10pt  
\begin{theorem}
The Main Construction produces a coding matrix that 
generates a perfectly $p$-decodable $\mu$-secure coding scheme when $p = 1$ for every $\mu \geq 0$, $\mu < k$, and when $\mu = 0, 1$ for every $p \geq 1$, $p | (k-\mu)$. 
\end{theorem}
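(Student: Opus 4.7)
The plan is to observe that the theorem is precisely the concatenation of the three lemmas already established in Sections~\ref{subsec:p=1}, \ref{subsec:muy=0}, and \ref{subsec:mu=1}, one for each of the three regimes in the statement (namely $p=1$; $\mu=0$; $\mu=1$). Each such lemma shows that, in its respective regime, the Main Construction produces a matrix $\bG$ satisfying properties (P1), (P2), and (P3), which by Definition~\ref{def:security_decodability} is exactly the definition of \emph{perfectly $p$-decodable $\mu$-secure}. Taking the union of the three cases would immediately yield the theorem.

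For a unified view, the common proof strategy in every regime reduces to verifying (P3), since (P1) and (P2) are handed to us in full generality by Lemma~\ref{lem:MDS} and Lemma~\ref{lem:decodability}. Via Lemma~\ref{lem:basic}, establishing weak $(\mu+p-1)$-security amounts to proving that no set of $\mu+p-1$ columns of $\bG$ can linearly generate any unit vector $\be_i$ for $i \in [k-\mu]$. For $p=1$ this is immediate, since weak $\mu$-security is subsumed by the strong $\mu$-security already in hand. For $\mu=0$ and $\mu=1$, the plan would be a contradiction argument: assuming such a set $L$ of columns exists, I would exploit the invertibility of the first diagonal block $\bB''_1$ (a square submatrix of the superregular $\bG''$) to build an explicit second combination that cancels $\be_i$, then add the two combinations to produce a nontrivial linear dependence among at most $k$ columns of $\bG$, contradicting the MDS property of $\bG$ stated in Lemma~\ref{lem:MDS}.

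The hardest step will be the $\mu=1$ case, because the extra bottom row of $\bG$ coming from the randomness block means that the canceling combination also generates a possibly nonzero value in that coordinate. I would deal with this by introducing a scalar multiple $\gamma \bG[1]$ of the first column---whose only nonzero entry is the bottom one $d_{1,1}$, itself nonzero by the superregularity of $\bG''$---in order to absorb the residue. Proving that the resulting dependence is genuinely nontrivial would then split into two subclaims: first, that the canceling coefficients on the first block are all nonzero (because any $p-1$ columns of $\bB''_1$ form a $[p,p-1,2]$ MDS code and so cannot reach a weight-one vector), and second, that the index set $J$ of the canceling block columns does not equal $L$, which will follow from the fact that the $(p+1)\times p$ submatrix $\left(\begin{smallmatrix}\bB''_1\\ \bE_1\end{smallmatrix}\right)$ of $\bG''$ is itself superregular and hence generates a $[p+1,p,2]$ MDS code, ruling out a weight-one generator. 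Once both non-degeneracies are in hand, the MDS contradiction closes the proof, and concatenating with the $p=1$ and $\mu=0$ cases yields the theorem.
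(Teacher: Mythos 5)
Your proposal is correct and mirrors the paper's approach exactly: the theorem carries no standalone proof in the paper but is stated as a summary of the three preceding lemmas (for $p=1$, $\mu=0$, $\mu=1$), each of which establishes (P1)--(P3) via Lemmas~\ref{lem:MDS}, \ref{lem:decodability}, and a case-specific verification of (P3). The extra detail you give for the $\mu=1$ case (introducing $\gamma\bG[1]$ to absorb the residue in the bottom coordinate, showing $\beta_j\neq 0$ via the $[p,p-1,2]$ MDS property of columns of $\bB''_1$, and showing $L\neq J$ via the superregularity of $\bigl(\begin{smallmatrix}\bB''_1\\ \bE_1\end{smallmatrix}\bigr)$) is precisely the argument the paper uses in that lemma.
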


\subsection{Existence of Perfect Coding Schemes Over Large Fields}
\label{subsec:large_q}

For general $p$ and $\mu$, the Main Construction may not produce a 
perfectly $p$-decodable $\mu$-secure coding matrix. 
In order to achieve a perfect code, there is another property 
that the superregular matrix $\bG''$ (used in Step 1 of the Main Construction) 
must satisfy. We first define this property in Definition~\ref{def:perfect}
and then proceed to prove that there always exists a superregular matrix
satisfying this property as long as the field size is sufficiently large. 

\begin{definition}
\label{def:perfect}
A $k \times n$ matrix $\bG''$ $(k \leq n)$ is said to be \emph{$(p,\mu)$-superregular}
if it is superregular and also satisfies the following additional property. 
Let $\bG''$ be written in the block form 
\begin{equation} 
\label{eq:G''}
\bG'' = 
\left( \begin{array}{c|c|c}  	
    \smash{\overset{\mu}{\bA''}} & \smash{\overset{k - \mu}{\bB''}} & \smash{\overset{n-k}{\bC''}} \\
		\hline
    \bD & \bE & \bF
  \end{array} \right)\mbox{}^{\overset{k-\mu}{}}_{\underset{\mu}{}}.
\end{equation}
Let $\bB''_i$ be the $i$th $p\times p$ submatrix lying on the main diagonal of $\bB''$. Note that the entries outside these $p\times p$
block matrices are nonzeros. 
\[
\bB'' = 
\left(
\begin{tabular}{cccc}
$\bB''_1$ & & & \\
& $\bB''_2$ & & \\
& & $\ddots$ & \\
& & & $\bB''_{\frac{k-\mu}{p}}$
\end{tabular}
\right).
\] 
Let $\bE_i$ be the $i$th $\mu\times p$ submatrix of $\bE$ such that $\bE = (\bE_1 \mid \bE_2 \mid 
\cdots \mid \bE_{\frac{k-\mu}{p}})$. 
For each $i \in [(k-\mu)/p]$ we consider the block matrix $\bH_i$ given below.
\begin{equation}
\label{eq:H}
\bH_i = 
\left(\begin{array}{c|c}  	
    \smash{\overset{\mu}{\bO{\color{white}'}}} & \smash{\overset{p}{\bB''_i}} \\
		\hline
    \bD & \bE_i
  \end{array}\right)\mbox{}^{\overset{p}{}}_{\underset{\mu}{}}.
\end{equation}
Then it is required that for every $i \in [(k-\mu)/p]$, 
deleting simultaneously an arbitrary row among the first $p$ rows of $\bH_i$
and an arbitrary column among the first $\mu$ columns of $\bH_i$ always
results in an invertible $(\mu + p - 1) \times (\mu + p - 1)$ matrix. 
\end{definition}

Let $\XiG = (\xi_{i,j})$ be a $k \times n$ matrix where
$\xi_{i,j}$'s are indeterminates over some $\fq$. 
The subscript $_{\bG''}$ simply means that at a certain time, we will replace the entries $\xi_{i,j}$'s of 
$\XiG$ by appropriate elements of $\fq$ to obtain a matrix $\bG''$ over $\fq$, 
which is to be used in the Main Construction. 
We represent $\XiG$ as a block matrix as follows. 
\begin{equation}
\label{eq:XiG}
\XiG = 
\left( 
  \begin{array}{c|c|c}  	
    \smash{\overset{\mu}{\XiA}} & \smash{\overset{k - \mu}{\XiB}} & \smash{\overset{n-k}{\XiC}} \\
		\hline
    \XiD & \XiE & \XiF
  \end{array} 
\right)\mbox{}^{\overset{k-\mu}{}}_{\underset{\mu}{}}.
\end{equation}
Let $f_{\text{super}}(\ldots, \xi_{i,j}, \ldots)
\in \fq[\ldots, \xi_{i,j},\ldots]$
be the product of the determinants of all square submatrices
of $\XiG$. 
%Then obviously this polynomial is not identically zero: $f_{\text{super}} \not\equiv \bzero$.

Let $\XiBi$ be the $i$th $p \times p$ submatrix lying on 
the main diagonal of $\XiB$. 
\begin{equation}
\label{eq:XiB}
\XiB = 
\left(
\begin{tabular}{cccc}
$\XiBo$ & & & \\
& $\XiBt$ & & \\
& & $\ddots$ & \\
& & & $\XiBr$
\end{tabular}
\right).
\end{equation}
Let $\XiE$ be the $i$th $\mu \times p$ submatrix of $\XiE$ such that $\XiE = (\XiEo \mid \XiEt \mid 
\cdots \mid \XiEr)$. 
For each $i \in [(k-\mu)/p]$ we consider the block matrix $\XiHi$ given below.
\begin{equation}
\label{eq:XiH}
\XiHi = 
\left(\begin{array}{c|c}  	
    \smash{\overset{\mu}{\bO{\color{white}'}}} & \smash{\overset{p}{\XiBi}} \\
		\hline
    \XiD & \XiEi
  \end{array}\right)\mbox{}^{\overset{p}{}}_{\underset{\mu}{}}.
\end{equation}
Let $f_i(\ldots,\xi_{i,j},\ldots) \in \fq[\ldots, \xi_{i,j}, \ldots]$ be the product of determinants of all
square $(\mu+p-1)\times (\mu+p-1)$ submatrices of
$\XiHi$ obtained by deleting simulatenously one arbitrary
row among the first $p$ rows and one arbitrary column among
the first $\mu$ columns.
Let $\fmp = \prod_{i = 1}^{\frac{k-\mu}{p}}
f_i \in \fq[\ldots, \xi_{i,j}, \ldots]$. 

\begin{lemma}
\label{lem:auxiliary1}
The polynomials $\fs$ and $\fmp$ defined as above both are
not identically zero. 
\end{lemma}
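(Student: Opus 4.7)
The plan is to show separately that each polynomial is nonzero. For both, the strategy is to exhibit, inside each determinant factor, a monomial (or a group of monomials with disjoint support) that cannot cancel, exploiting the fact that all the entries of $\XiG$ are pairwise distinct indeterminates.

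First I would handle $\fs$. Every factor of $\fs$ is the determinant of a $t\times t$ submatrix $\bM$ of $\XiG$ for some $t$. Since the entries of $\XiG$ are pairwise distinct formal indeterminates, the Leibniz expansion $\det(\bM)=\sum_{\sigma}\operatorname{sgn}(\sigma)\prod_j \xi_{i_j,\sigma(j)}$ exhibits $t!$ distinct monomials (distinct permutations produce distinct tuples of entries), so no cancellation occurs and each factor is a nonzero polynomial. Since $\fq[\ldots,\xi_{i,j},\ldots]$ is an integral domain, the product $\fs$ is nonzero.

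The substantive case is $\fmp=\prod_i f_i$; it suffices to prove each $f_i\neq 0$, and in turn that every $(\mu+p-1)\times(\mu+p-1)$ minor factor of $f_i$ is a nonzero polynomial. Fix $i\in[(k-\mu)/p]$ and consider the submatrix $\XiHi^{(r,c)}$ obtained by deleting a row $r\in[p]$ and a column $c\in[\mu]$ from
\[
\XiHi=\left(\begin{array}{c|c}\bO_{p\times\mu} & \XiBi \\ \hline \XiD & \XiEi\end{array}\right).
\]
The obstacle is that $\XiHi$ contains a nontrivial zero block, so the clean ``distinct indeterminate'' argument does not apply directly to its determinant. To handle this I would perform a Laplace expansion along the top $p-1$ rows of $\XiHi^{(r,c)}$. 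These rows are zero on the first $\mu-1$ columns, so the only nonvanishing minors come from choosing all $p-1$ columns from among the last $p$. Writing $s\in[p]$ for the unique ``last-$p$'' column left out, one obtains
\[
\det\bigl(\XiHi^{(r,c)}\bigr)=\sum_{s=1}^{p}\pm\det(B_{r,s})\cdot\det\bigl(\XiD^{(c)}\mid \XiEi[*,s]\bigr),
\]
where $B_{r,s}$ is $\XiBi$ with row $r$ and column $s$ deleted, $\XiD^{(c)}$ is $\XiD$ with column $c$ deleted, and $\XiEi[*,s]$ is the $s$-th column of $\XiEi$.

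The key point is that the $p$ summands involve pairwise disjoint sets of $\XiEi$-indeterminates (the $s$-th summand uses only column $s$ of $\XiEi$), so no monomial in one summand can cancel a monomial in another. Within each individual summand, $\det(B_{r,s})$ uses only $\XiBi$-entries while $\det(\XiD^{(c)}\mid \XiEi[*,s])$ uses only $\XiD$-and-$\XiEi$-entries; both are nonzero polynomials by the Leibniz/distinct-indeterminate argument from the first paragraph, and their indeterminate supports are disjoint, so their product is nonzero. Hence the whole Laplace sum is a nonzero polynomial. Since $r,c$ were arbitrary, every factor of $f_i$ is nonzero, so $f_i\neq 0$, and therefore $\fmp=\prod_i f_i\neq 0$ in the integral domain $\fq[\ldots,\xi_{i,j},\ldots]$.
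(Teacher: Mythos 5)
Your argument for $\fs$ is essentially identical to the paper's: both invoke the Leibniz expansion and the fact that distinct indeterminates give $t!$ pairwise distinct monomials, so no cancellation can occur. For $\fmp$, however, you take a genuinely different route. The paper gives a one-line argument: after deleting a row among the first $p$ and a column among the first $\mu$ of $\XiHi$, the resulting $(\mu+p-1)\times(\mu+p-1)$ matrix $\bP$ has its zero block confined to the top-left $(p-1)\times(\mu-1)$ corner, so the antidiagonal of $\bP$ misses the zero block entirely; the corresponding Leibniz term is a nonzero monomial in distinct indeterminates and therefore cannot be canceled, so $\det(\bP)\neq 0$. You instead perform a Laplace (cofactor) expansion along the top $p-1$ rows, observe that the zero block forces all $p-1$ chosen columns to come from the last $p$, and then argue non-cancellation across the $p$ resulting summands by noting that summand $s$ contains exactly one entry of column $s$ of $\XiEi$ and none from other columns. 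Both arguments are correct and rest on the same underlying lever (all entries of $\XiG$ are distinct indeterminates, so monomials don't collide). The paper's approach is shorter and more local, since it only needs to exhibit a single surviving monomial; yours is more structural and arguably more robust, as it would survive generalizations where the entries are not fully generic but the block factorization still holds.
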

\begin{proof}
In this proof we use the standard definition of
determinant to calculate a determinant of a matrix. 
More specifically, if $\bP = (p_{i,j})$ is any square matrix of order $r$ then the determinant 
\begin{equation}
\label{eq:det}
\det(\bP) = \sum_{\sigma \in \mathscr{S}_r} \sgn(\sigma) 
\prod_{i = 1}^r p_{i,\sigma(i)},
\end{equation}
where $\mathscr{S}_r$ denotes the symmetric (permutation) group on $r$ elements, and
$\sgn(\sigma)$ denotes the sign of the permutation $\sigma$. 

We first prove that $\fs$ is not identically zero. 
It suffices to show that the determinant of each square submatrix of $\XiG$ is not identically zero. 
According to (\ref{eq:det}), such a determinant is 
a sum of $(\mu+p-1)!$ monomials (with appropriate signs $\pm 1$) of $\xi_{i,j}$'s.
Those monomials are obviously pairwise distinct. Therefore, their sum is not identically zero. 

We now show that $\fmp$ is not identically zero. 
It suffices to show that for each $i \in [(k-\mu)/p]$, 
deleting simulateneously a row among the first $p$ rows
and a column among the first $\mu$ columns of $\XiHi$
(given in (\ref{eq:XiH})) results in a matrix with 
not identically zero determinant. 
For convenience, let us refer to such a matrix as 
$\bP$. Then $\bP$ is a square matrix of order $(\mu+p-1)$.
All entries of $\bP$ are indeterminates $\xi_{i,j}$, 
except for those lying in the $(p-1) \times (\mu-1)$
top-left submatrix. 
\begin{equation}
\label{eq:P}
\bP = 
\left( \begin{array}{c|c}  	
    \smash{\overset{\mu-1}{\bO}} & \smash{\overset{p}{(\xi_{i,j})}}\\
		\hline
    (\xi_{i,j}) & (\xi_{i,j})
  \end{array} \right)\mbox{}^{\overset{p-1}{}}_{\underset{\mu}{}}.
\end{equation}
Therefore, all entries on the antidiagonal (the diagonal goes from the lower-left corner
to the upper-right corner of the matrix) are nonzero. 
These $\mu+p-1$ antidiagonal entries form an unique nonzero monomial of $\xi_{i,j}$'s in the 
formula for determinant of $\bP$ in (\ref{eq:det}), which cannot be
canceled out by any other monomial. Hence, $\det(\bP)$ is not identically zero as a polynomial. 
\end{proof}

\begin{lemma}
\label{lem:auxiliary2}
For every $0 \leq \mu < k$ and $1 \leq p \leq k - \mu$, 
$p | (k - \mu)$, 
there always exists a $(p, \mu)$-superregular $k \times n$ matrix $\bG''$
over any sufficiently large field. 
\end{lemma}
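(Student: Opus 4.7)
The plan is to use the polynomial (algebraic) method: encode both the superregularity and the extra condition in Definition~\ref{def:perfect} as the non-vanishing of two polynomials, combine them into one, and then invoke a Schwartz--Zippel-type argument to extract a field assignment.

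More precisely, I would consider the indeterminate matrix $\XiG$ of (\ref{eq:XiG}) and the product polynomial
\[
f(\ldots,\xi_{i,j},\ldots) \;=\; \fs(\ldots,\xi_{i,j},\ldots)\,\cdot\,\fmp(\ldots,\xi_{i,j},\ldots)\;\in\;\fq[\ldots,\xi_{i,j},\ldots].
\]
The key observation is that a specialization $\bG''$ of $\XiG$ over $\fq$ satisfies $\fs(\bG'')\neq 0$ exactly when every square submatrix of $\bG''$ is invertible, i.e.\ $\bG''$ is superregular; and it satisfies $\fmp(\bG'')\neq 0$ exactly when, for every $i\in[(k-\mu)/p]$, every $(\mu+p-1)\times(\mu+p-1)$ submatrix of $\bH_i$ obtained by deleting one of the first $p$ rows and one of the first $\mu$ columns is invertible. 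Thus a specialization with $f(\bG'')\neq 0$ is precisely a $(p,\mu)$-superregular matrix in the sense of Definition~\ref{def:perfect}.

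By Lemma~\ref{lem:auxiliary1}, neither $\fs$ nor $\fmp$ is identically zero. Since $\fq[\ldots,\xi_{i,j},\ldots]$ is an integral domain, the product $f$ is a nonzero multivariate polynomial. I would then apply the Schwartz--Zippel lemma (or the elementary fact that a nonzero polynomial of total degree $d$ in $N$ variables over $\fq$ has at most $d\,q^{N-1}$ zeros in $\fq^{N}$): whenever $q$ exceeds the total degree of $f$, there exists an assignment $\xi_{i,j}\mapsto g_{i,j}\in\fq$ making $f$ nonzero. The resulting matrix $\bG''=(g_{i,j})$ is then $(p,\mu)$-superregular by the correspondence above. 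A crude degree bound $\deg(f)\le \deg(\fs)+\deg(\fmp)$, both of which are easily bounded in terms of $k,n,\mu,p$ (each determinant contributes at most its order in degree, and the number of square/deletion submatrices is polynomial in $n,k$), suffices to make ``sufficiently large field'' precise.

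The only mildly delicate point is to confirm that the correspondence between ``$f$ does not vanish'' and ``$\bG''$ is $(p,\mu)$-superregular'' is an iff, but this is immediate from the definition of the two polynomials as products of determinants of exactly the submatrices appearing in Definition~\ref{def:perfect}. No substantive obstacle remains once Lemma~\ref{lem:auxiliary1} is in hand; the heart of the argument is already absorbed there, and this lemma is essentially a clean application of the probabilistic/algebraic existence principle.
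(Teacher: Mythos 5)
Your proof is correct and follows essentially the same route as the paper: both form the product $\fs\cdot\fmp$, use Lemma~\ref{lem:auxiliary1} to conclude it is a nonzero polynomial, and then deduce the existence of a non-vanishing assignment over a sufficiently large field (the paper cites a lemma of Ho~\et\ for this step, which is precisely the Schwartz--Zippel-type fact you invoke). Your added degree-bound remark making ``sufficiently large'' quantitative is a harmless refinement, not a different argument.
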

\begin{proof}
According to Lemma~\ref{lem:auxiliary1}, the polynomial $\fs \fmp \in \fq[\ldots,\xi_{i,j},\ldots]$
is not identically zero. 
By \cite[Lemma 4]{Ho2006}, for sufficiently large $q$, there exist $g''_{i,j} \in \fq$
such that $\fs(\ldots, g''_{i,j}, \ldots) \neq 0$ and $\fmp (\ldots, g''_{i,j}, \ldots) \neq 0$. 
The former condition guarantees that the $k \times n$ matrix $\bG''=(g''_{i,j})$ is a superregular
matrix, while the latter further implies that $\bG''$ is $(p,\mu)$-superregular, according to 
Definition~\ref{def:perfect}.
Thus, a $(p, \mu)$-superregular $k \times n$ matrix $\bG''$ always exists 
over any sufficiently large field.   
\end{proof}

\begin{lemma}
\label{lem:auxiliary3}
Suppose that the matrix $\bG''$ used in Step 1 of the Main Construction
is $(p,\mu)$-superregular and that $\bG$ is the resulting coding matrix.
\begin{equation} 
\label{eq:G1}
\bG = 
\left(
\begin{tabular}{c|cccc|c}
\multirow{4}{*}{$\bO$} & $\bB''_1$ & & & & \multirow{4}{*}{$\bC$}\\
& & $\bB''_2$ & & & \\
& & & $\ddots$ & & \\
& & &  & $\bB''_{(k-1)/p}$ & \\
\hline
$\bD$ & $\bE_1$ & $\bE_2$ & $\cdots$ & $\bE_{(k-1)/p}$ & $\bF$
\end{tabular}
\right),
\end{equation}
where each $\bE_i$ is a $\mu\times p$ submatrix of $\bE$ such that 
$\bE = \big(\bE_1 \mid \bE_2 \mid \cdots \mid \bE_{(k-1)/p}\big)$. 
For each $i \in [(k-\mu)/p]$ let  
\begin{equation}
\label{eq:Hi}
\bH_i = 
\left(\begin{array}{c|c}  	
    \smash{\overset{\mu}{\bO{\color{white}'}}} & \smash{\overset{p}{\bB''_i}} \\
		\hline
    \bD & \bE_i
  \end{array}\right)\mbox{}^{\overset{p}{}}_{\underset{\mu}{}}.
\end{equation}
Then any set of $\mu+p-1$ columns of $\bH_i$ that consists of
the last $p$ columns and some $\mu-1$ columns among the first $\mu$
columns generates a $[\mu+p, \mu+p-1,2]$ MDS code. 
Thus, such a set of $\mu+p-1$
columns of $\bH_i$, as well as the corresponding set of $\mu+p-1$
columns of $\bG$, never generates a nontrivial vector of 
weight less than two.
\end{lemma}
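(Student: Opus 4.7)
The plan is to reformulate and then case-split. First note that the $\mu+p-1$ chosen columns, assembled into a $(\mu+p)\times(\mu+p-1)$ matrix $\bM$, generate an MDS code $[\mu+p,\mu+p-1,2]$ if and only if every $(\mu+p-1)\times(\mu+p-1)$ submatrix obtained by deleting one row of $\bM$ is invertible: transposing $\bM$ gives a $(\mu+p-1)\times(\mu+p)$ generator matrix, and for an $[n,k]$ code with $k=n-1$ the MDS property is exactly that every $k$ columns of the generator matrix are linearly independent. So I would fix the omitted column index $j_0\in[\mu]$ (so $\bM$ is $\bH_i$ with column $j_0$ removed) and verify invertibility for the two types of row deletion.

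For the deletion of a row among the top $p$ rows of $\bM$, the resulting $(\mu+p-1)\times(\mu+p-1)$ matrix is obtained from $\bH_i$ by simultaneously deleting one row from among its first $p$ rows and the column indexed by $j_0$ from its first $\mu$ columns. This is precisely the configuration in which the $(p,\mu)$-superregularity of $\bG''$ (Definition~\ref{def:perfect}) asserts invertibility, so this case is immediate.

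For the deletion of a row among the bottom $\mu$ rows, the remaining matrix has block form
\[
\bM' =
\left(\begin{array}{c|c}
    \bO_{p\times(\mu-1)} & \bB''_i \\
		\hline
    \bD'' & \bE''_i
  \end{array}\right),
\]
where $\bD''$ is a $(\mu-1)\times(\mu-1)$ submatrix of $\bD$ and $\bE''_i$ is a $(\mu-1)\times p$ submatrix of $\bE_i$. Laplace expansion along the first $p$ rows kills every term that includes a column of the zero block, leaving only $\det(\bM')=\pm\det(\bB''_i)\cdot\det(\bD'')$. Both factors are nonzero since $\bB''_i$ and $\bD''$ are square submatrices of the superregular matrix $\bG''$.

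Finally, for the weight-less-than-two claim on the corresponding columns of $\bG$: the MDS property with $d=2$ already gives this for the columns of $\bH_i$. To transfer it to $\bG$, I would observe that every chosen column of $\bG$ is supported only on the $\mu+p$ rows of $\bH_i$ inside $\bG$ --- columns indexed by $L$ vanish on all top rows by the leading $\bO$ block, and the columns from the $i$-th stripe of the last $k-\mu$ columns vanish outside the $\bB''_i$ row band by block diagonality. Hence the $\fq^k$-weight of any combination of chosen columns of $\bG$ equals the $\fq^{\mu+p}$-weight of the corresponding combination in $\bH_i$, and the bound transfers verbatim. The main obstacle is really just the bookkeeping that aligns the first case with Definition~\ref{def:perfect}; the block-determinant calculation and the support argument for $\bG$ are essentially routine.
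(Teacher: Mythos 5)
Your proof is correct and takes essentially the same route as the paper: you split on whether the deleted row is among the top $p$ rows (handled by $(p,\mu)$-superregularity) or the bottom $\mu$ rows (handled by the block determinant $\pm\det(\bB''_i)\det(\bD'')$ and superregularity of $\bG''$). Your added support argument for transferring the weight bound from $\bH_i$ to $\bG$ makes explicit something the paper merely asserts, but the core reasoning is the same.
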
 
\begin{proof} 
Note that since $\bG''$ is $(p,\mu)$-superregular,
deleting simultaneously an arbitrary row among the first $p$ rows
and an arbitrary column among the first $\mu$ columns of $\bH_i$
always results in an invertible matrix of order $\mu+p-1$.

Let $\bK$ be the submatrix of $\bH_i$ obtained by deleting
some column among the first $\mu$ columns of $\bH_i$. 
In order to show that the columns of $\bK$ generate an MDS code, 
it suffices to prove that every square submatrix $\bQ$ of order $\mu+p-1$
of $\bK$ is invertible. Indeed, if $\bQ$ is
obtained by deleting one row among the first $p$ rows of $\bK$, 
then it is obviously invertible, due to the property of $\bH_i$
specified earlier. Suppose that $\bQ$ is obtained by 
deleting one row among the last $\mu$ rows of $\bK$. 
Then $\det(\bQ)$ is equal to the product of $\det(\bB''_i)$ and
the determinant of a square submatrix of order $\mu-1$ of $\bD$. 
As both $\bB''_i$ and $\bD$ are superregular, we conclude that the determinant of $\bQ$ must be nonzero. Therefore, $\bQ$ is invertible.  
\end{proof} 

\begin{theorem} 
For every $0 \leq \mu < k$ and $1 \leq p \leq k - \mu$, $p | (k - \mu)$,
such that $k \geq 2(\mu+p)-1$,  
there exists a perfectly $p$-decodable $\mu$-secure coding scheme over
any sufficiently large field $\fq$. 
\end{theorem}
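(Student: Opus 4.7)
The plan is to start with a $(p,\mu)$-superregular $k\times n$ matrix $\bG''$, whose existence over sufficiently large fields is guaranteed by Lemma~\ref{lem:auxiliary2}, and feed it into the Main Construction to obtain a coding matrix $\bG$. Since $\bG''$ is in particular superregular, Theorem~\ref{thm:imperfect} immediately gives (P1) and (P2), so the entire argument reduces to verifying (P3), namely weak $(\mu+p-1)$-security. By Lemma~\ref{lem:basic}, this in turn reduces to showing that no set of $\mu+p-1$ columns of $\bG$ can span any standard basis vector $\be_i\in\fq^k$ with $i\in[k-\mu]$.

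I argue by contradiction: suppose $\sum_{j\in L}\al_j\bG[j]=\be_i$ for some $L\subseteq[n]$ with $|L|\le\mu+p-1$. By the symmetric roles played by the diagonal blocks $\bB''_1,\ldots,\bB''_{(k-\mu)/p}$, I may assume $i\in\{1,\ldots,p\}$ so that the relevant auxiliary matrix is $\bH_1$ from \eqref{eq:Hi}. A row swap brings $\bH_1$ into block upper triangular form with diagonal blocks $\bD$ and $\bB''_1$, both invertible by the superregularity of $\bG''$, so $\bH_1$ itself is invertible. Set $\bx:=\bH_1^{-1}\be_i^{(\mu+p)}\in\fq^{\mu+p}$. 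Using the all-zero top-left block together with the block-diagonal top-middle block of $\bG$, a direct check confirms that $\sum_{j=1}^{\mu+p}x_j\bG[j]=\be_i$ holds in $\fq^k$ as well.

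The most delicate step, which I expect to be the main obstacle, is proving that \emph{every} coordinate $x_j$ of $\bx$ is nonzero. By Cramer's rule, $x_j$ equals $\pm\det(\bH_1^{(i,j)})/\det(\bH_1)$, where $\bH_1^{(i,j)}$ denotes $\bH_1$ with row $i$ and column $j$ deleted. For $j\in\{1,\ldots,\mu\}$ this is precisely the non-vanishing determinant guaranteed by Definition~\ref{def:perfect} of $(p,\mu)$-superregularity. For $j\in\{\mu+1,\ldots,\mu+p\}$ the $(p,\mu)$-superregular property does not apply directly, but the top-left block of $\bH_1^{(i,j)}$ is then an all-zero $(p-1)\times\mu$ block, and a row swap puts the matrix into block upper triangular form with diagonal blocks $\bD$ and a $(p-1)\times(p-1)$ submatrix of $\bB''_1$; both are invertible by the ordinary superregularity of $\bG''$, so $\det(\bH_1^{(i,j)})\neq 0$ in this case too.

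To finish, subtract the two expressions for $\be_i$ to get $\sum_{j\in L}\al_j\bG[j]-\sum_{j=1}^{\mu+p}x_j\bG[j]=\bzero$, a relation involving at most $|L\cup[\mu+p]|\le 2(\mu+p)-1\le k$ columns of $\bG$, where the last inequality uses the hypothesis $k\ge 2(\mu+p)-1$. Since $|L|<\mu+p$, I may pick $j_0\in[\mu+p]\setminus L$; the coefficient of $\bG[j_0]$ in the combined sum is $-x_{j_0}\neq 0$, so the relation is nontrivial. But Lemma~\ref{lem:MDS} ensures that $\bG$ generates an $[n,k]$ MDS code, in which any $k$ columns are linearly independent, a contradiction. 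Hence (P3) also holds and $\bG$ generates a perfectly $p$-decodable $\mu$-secure coding scheme.
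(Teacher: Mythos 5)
Your proof is correct and takes a genuinely different, cleaner route than the paper's. The paper proves (P3) by combining three auxiliary linear relations---the assumed relation supported on $L$, a relation supported on $J=\{\mu+1,\dots,\mu+p\}$ obtained from the invertibility of $\bB''_1$, and a relation supported on $[\mu]$ obtained from the invertibility of $\bD$---and then carries out a three-way case analysis on whether $J\subseteq L$ and whether $L\setminus J\subseteq[\mu]$, invoking Lemma~\ref{lem:auxiliary3} (which recasts $(p,\mu)$-superregularity as an MDS property of certain $(\mu+p-1)$-column subsets of $\bH_i$) in two of the cases. You instead exhibit a single canonical witness: the unique coefficient vector $\bx=\bH_1^{-1}\be_i^{(\mu+p)}$ expressing $\be_i$ as a combination of columns $1,\dots,\mu+p$ of $\bG$, and you show directly via Cramer's rule that \emph{every} entry of $\bx$ is nonzero. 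For $j\le\mu$ the relevant minor of $\bH_1$ is exactly one of those that Definition~\ref{def:perfect} forces to be nonzero, and for $j>\mu$ the minor factors (after a row permutation) into $\det\bD$ times a $(p-1)\times(p-1)$ minor of $\bB''_1$, both nonzero by ordinary superregularity. Since $|L|\le\mu+p-1<\mu+p$, some $j_0\in[\mu+p]\setminus L$ exists; subtracting your witness from the assumed relation produces a dependence among at most $2(\mu+p)-1\le k$ columns in which $\bG[j_0]$ carries the nonzero coefficient $-x_{j_0}$, contradicting the MDS property from Lemma~\ref{lem:MDS}. What your approach buys is the complete elimination of the casework, and it makes transparent what $(p,\mu)$-superregularity actually encodes: it is precisely the condition guaranteeing that the natural decoding combination for $s_i$ has full support on $[\mu]\cup\{\mu+1,\dots,\mu+p\}$, which is the pigeonhole fact the contradiction hinges on.
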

\begin{proof}
In Step 1 of the Main Construction let us choose $\bG''$ to be a $(p,\mu)$-superregular
matrix over a sufficiently large field $\fq$. 
Let $\bG$ be the coding matrix produced by the Main Construction.
We present $\bG$ in the block form as shown in the Main Construction.
\begin{equation} 
\label{eq:G}
\bG = 
\left(
\begin{tabular}{c|cccc|c}
\multirow{4}{*}{$\bO$} & $\bB''_1$ & & & & \multirow{4}{*}{$\bC$}\\
& & $\bB''_2$ & & & \\
& & & $\ddots$ & & \\
& & &  & $\bB''_{(k-1)/p}$ & \\
\hline
$\bD$ & $\bE_1$ & $\bE_2$ & $\cdots$ & $\bE_{(k-1)/p}$ & $\bF$
\end{tabular}
\right). 
\end{equation} 
Note that each $\bE_i$ is a $\mu\times p$ submatrix of $\bE$ such that 
$\bE = \big(\bE_1 \mid \bE_2 \mid \cdots \mid \bE_{(k-1)/p}\big)$. 
  
According to Lemma~\ref{lem:MDS} and Lemma~\ref{lem:decodability},
$\bG$ satisfies (P1) and (P2). 
It remains to show that $\bG$ satisfies (P3) - weak security. 

According to Lemma~\ref{lem:basic}, we aim to show that no subsets
of $(\mu+p-1)$ columns of $\bG$ generate the unit vector $\be_i$
for every $i \in [k-\mu]$. 
Without loss of generality, we can assume, by contradiction, that a subset
$L$ of some $(\mu+p-1)$ columns of $\bG$ generates the unit vector $\be_1$. 
For simplicity, we slightly abuse the notation and also use $L$ to
denote the set of indices of the columns in $L$. 
Then there exist some coefficients $\al_j \in \fq$ $(j \in L)$ such that
\begin{equation}
\label{eq:l1}
\sum_{j \in L} \al_j \bG[j] = 
\left(
\begin{tabular}{c}
$1$ \\
$0$\\
$\vdots$\\
$0$\\
\hline
$0$\\
$\vdots$\\
$0$
\end{tabular}
\right),
\end{equation}  
where $\bG[j]$ is the $j$th column of $\bG$. 
For convenience, we assign $\al_j = 0$ for all $j \in [n] \setminus L$. 
Let $J = \{\mu+1,\mu+2,\ldots,\mu+p\}$. 

As $\bB''_1$ is invertible, there exist some coefficients $\bt_j \in \fq$ such that
\begin{equation}
\label{eq:l2}
\sum_{j \in J} \beta_j \bG[j] = 
\left(
\begin{tabular}{c}
$-1$ \\
$0$\\
$\vdots$\\
$0$\\
\hline
$\bE_1 \boldsymbol{\beta}^{\text{T}}$
\end{tabular}
\right), 
\end{equation} 
where $\boldsymbol{\beta} = (\beta_{\mu+1}, \ldots, \beta_{\mu+p})$.
According to Lemma~\ref{lem:superregular}, any set of less
than $p$ columns of $\bB''_1$ would never generate a nontrivial vector of weight less than two. 
Therefore, from (\ref{eq:l2}), we deduce that $\beta_j \neq
0$ for every $j \in J$. 
Moreover, as $\bD$ is invertible, 
there exist some coefficients $\gamma_j \in \fq$ such that 
\[
\sum_{j \in [\mu]}\gamma_j \bD[j] = -\bE_1 \boldsymbol{\beta}^{\text{T}},
\]
where $\bD[j]$ is the $j$th column of $\bD$. 
Therefore, 
\begin{equation}
\label{eq:l3}
\sum_{j \in [\mu]}\gamma_j \bG[j] =  
\left(
\begin{tabular}{c}
$0$ \\
$0$\\
$\vdots$\\
$0$\\
\hline
$-\bE_1 \boldsymbol{\beta}^{\text{T}}$
\end{tabular}
\right).
\end{equation}

From (\ref{eq:l1}), (\ref{eq:l2}), and (\ref{eq:l3}) we have
\begin{equation}
\label{eq:l4}
\sum_{j \in L} \al_j \bG[j] + \sum_{j \in J} \beta_j \bG[j]
+ \sum_{j \in [\mu]} \gamma_j \bG[j] = \bzero. 
\end{equation}
Note that the left-hand side of (\ref{eq:l4}) is a linear
combination of at most $2(\mu+p) - 1\ (\leq k)$ columns of $\bG$.
We consider the following three cases and aim to obtain contradictions in all cases. \\

\nin\textbf{Case 1.}
$\exists j_0 \in J: \al_{j_0} = 0$.

We argue earlier that $\beta_j \neq 0$ for every
$j \in J$. Moreover, in this case, there exists $j_0 \in J$ such that $\al_{j_0} = 0$. 
Hence, the linear combination in (\ref{eq:l4}) must be
nontrivial, since at least one vector, namely $\bG[j_0]$, has a nonzero coefficient $\beta_{j_0}$.   
Therefore, we obtain a contradiction, since $\bG$ generates
an $[n,k]$ MDS code and at the same time, has some set of
at most $k$ columns that are linearly dependent.\\

\nin\textbf{Case 2.}
$\forall j \in J: \al_j \neq 0$ and $L \setminus J \subseteq [\mu]$. 

Note that in this case, $L$ consists of $p$ columns indexed by $J$ and 
some $\mu-1$ of the first $\mu$ columns of $\bG$. 
According to Lemma~\ref{lem:auxiliary3}, $L$ does not generate
any nontrivial vector of weight less than two. 
This conclusion contradicts (\ref{eq:l1}).
\\
 
\nin\textbf{Case 3.}
$\forall j \in J: \al_j \neq 0$ and $L \setminus J \not\subseteq [\mu]$. 

In this case, $L$ consists of $p$ columns indexed by $J$, 
at most $\mu-1$ of the first $\mu$ columns of $\bG$, and
probably some columns indexed by elements in the set $[n] \setminus ([\mu] \cup J)$. 
If $\al_j = 0$ for all $j \in L \setminus ([\mu] \cup J)$,
then similar to Case 2, we obtain a contradiction due to 
Lemma~\ref{lem:auxiliary3} and (\ref{eq:l1}).

If there exists $j \in L \setminus ([\mu] \cup J)$ such that $\al_j 
\neq 0$, then (\ref{eq:l4}) presents a nontrivial linear combination of at most $k$ columns of $\bG$, which is a zero vector.
This assertion contradicts the fact that as $\bG$ generates
an $[n,k]$ MDS code, any set of $k$ columns of $\bG$ must 
be linearly independent.   
\end{proof}

\section{Conclusion}
\label{sec:conclusion}

We propose in this work a method to construct erasure coding
schemes which are not only (strongly and weakly) secure but also
partially decodable. The partial decodablity feature 
is extremely important in applications such as media
streaming, where it is usually not necessarily for the user to download the whole file
before he or she can start the playback. 
 
The type of erasure coding scheme developed in our work
offer the flexibility between security and partial decodability. 
The system designer can adjust the security parameter and the
partial decodability parameter accordingly to obtain any possible 
mixture of security and exposure of the stored data. 
We emphasize that we can construct an erasure code which is 
both secure and partially decodable without adding any extra storage
overhead compared to a merely secure erasure code studied in the literature.  

\bibliographystyle{ieeetran}
\bibliography{Quasi-Systematic}

\appendix
\subsection{Proof of Lemma~\ref{lem:basic}}
\label{app:1}

The \emph{only if} direction is obvious, according to the
discussion below Lemma~\ref{lem:basic}. It remains to prove
the \emph{if} direction. 

Let $\bst$ and $\brt$ be the guessed value (by the 
adversary) for the real file $\bs$ and for the random vector $\br$, respectively. By accessing $m$
storage nodes, the adversary obtains some linear combinations of the file symbols $s_i$'s and the random
symbols $r_j$'s, namely $\ba = (\bs\mid\br)\bM$. 
We aim to show that for every $i 
\in [k-\mu]$, all possible values for $s_i$ are equally
probable.  

Choose an arbitrary $i \in [k-\mu]$ and an arbitrary
element $\vt \in \fq$. It suffices to show that the 
system of linear equations
\begin{equation} 
\label{eq:app1}
\begin{cases}
(\bst \mid \brt)\bM = \ba\\
(\bst \mid \brt)\be_i^{\text{T}} = \vt
\end{cases}
\end{equation} 
always has the same number of solutions $(\bst \mid \brt)$ for every 
choice of $\vt \in \fq$. 
Here $\be_i$ denotes the unit vector with a one
at the $i$th coordinate and zeroes elsewhere. 
It is a basic fact from linear
algebra that the solution set for the system (\ref{eq:app1}) 
above, if nonempty, is an affine space, which is the sum of one solution of (\ref{eq:app1}) and the solution space
of the corresponding homogeneous system. 
Therefore, if the system (\ref{eq:app1}) always has at least one solution for every $\vt$, then it would have the same number of solutions for every $\vt$. 
Therefore, it remains to prove that this system always has a solution for every choice of $\vt \in \fq$. 

Note that we have
\begin{equation} 
\label{eq:app2}
\begin{cases}
(\bs \mid \br)\bM = \ba\\
(\bs \mid \br)\be_i^{\text{T}} = s_i
\end{cases}.
\end{equation} 
By subtracting the corresponding equations in the two
systems (\ref{eq:app1}) and (\ref{eq:app2}) and let
$\bx = (\bst - \bs \mid \brt - \br)$ be the new unknowns,
we obtain the following system
\begin{equation} 
\label{eq:app3}
\begin{cases}
\bx\bM = \bzero\\
\bx\be_i^{\text{T}} = \vt - s_i
\end{cases}
\end{equation}
It is clear that the system (\ref{eq:app1}) has a solution
if and only if the system (\ref{eq:app3}) has a solution. 
Hence, it suffices to show that the system (\ref{eq:app3})
always has a solution for every choice of $\vt \in \fq$. 

We claim that there exists some $\overline{\bx} \in
\fq^k$ satisfying $\overline{\bx}\bM = \bzero$ and 
$\bx\be_i^{\text{T}} \neq 0$. Then it is obvious that
\[
\bx^* = \dfrac{\vt-s_i}{\bx\be_i^{\text{T}}}\overline{\bx}
\]
would be a solution of (\ref{eq:app3}), and hence the proof
follows. Indeed, if $\bx\be_i^{\text{T}} = 0$ for every
$\bx$ satisfying $\bx \bM = \bzero$ then $\be_i^{\text{T}}$
must belong the the orthogonal complement of the solution
space of the system $\bx \bM = \bzero$, which is precisely
the column space of $\bM$. However, according to our
assumption, the column space of $\bM$ does not contain
$\be_i$. Thus, there must exists some $\overline{\bx} \in \fq^k$ satisfying $\overline{\bx}\bM = \bzero$ and 
$\bx\be_i^{\text{T}} \neq 0$, as claimed above. 
\end{document}